\documentclass[11pt]{article}

\usepackage{dblfloatfix}
\usepackage{listings}
\usepackage{algorithm}
\usepackage{algpseudocode}
\usepackage[english]{babel}

\usepackage[letterpaper,top=2.2cm,bottom=2.2cm,left=3cm,right=3cm,marginparwidth=1.75cm]{geometry}

\usepackage[utf8]{inputenc}
\usepackage{mathrsfs}
\usepackage{dsfont}
\usepackage{amsfonts}
\usepackage{complexity}
\usepackage{comment}
\usepackage{latexsym}
\usepackage{amssymb}    
\usepackage{amsmath}
\usepackage{graphicx}
\usepackage[colorlinks=true, allcolors=blue]{hyperref}
\newcommand{\opn}{\operatorname}

	\newcommand{\bbN}{\mathbb{N}}

	\newcommand{\bbZ}{\mathbb{Z}}

	\newcommand{\mcH}{\mathcal{H}}

\newcommand{\mcO}{\mathcal{O}}	\newcommand{\mcP}{\mathcal{P}}

\newcommand{\mcU}{\mathcal{U}}

\newcommand{\eps}{\epsilon}

\newcommand{\abs}[1]{\left|#1\right|}                  

\newcommand{\ceil}[1]{\lceil #1 \rceil}
\newcommand{\floor}[1]{\lfloor #1 \rfloor}

\newcommand{\N}{\mathbb{N}}

\newcommand{\Z}{\mathbb{Z}}

\newcommand{\opsigma}{\operatorname{\sigma}}

\newcommand{\pf}{\operatorname{Pf}}

\renewcommand{\P}{\mathsf{P}}

\renewcommand{\epsilon}{\varepsilon}

\renewcommand{\epsilon}{\varepsilon}
\newcommand{\ignore}[1]{}

\usepackage{babel}
\usepackage{amsthm}
\usepackage{thmtools,thm-restate}
\usepackage{fancyhdr}
\usepackage{lineno}

\newtheorem{definition}{Definition}[section]
\newtheorem{theorem}{Theorem}[section]
\newtheorem{claim}[theorem]{Claim}
\newtheorem{corollary}{Corollary}[theorem]
\newtheorem{lemma}[theorem]{Lemma}

\newtheorem{observation}[theorem]{Observation}

\newtheorem{remark}[theorem]{Remark}

\hypersetup{
	colorlinks=true,
	linkcolor=blue,
	citecolor=blue,
	urlcolor=blue,
	pdftitle={Report},
	pdfauthor={}
}
\usepackage{cleveref}

\begin{document}

\title{Catalytic space-efficient algorithms for Distance and Determinant}
\author{
	Srijan Chakraborty \\
	University of Waterloo \\
	\texttt{srijan.chakraborty@uwaterloo.ca}
	\and
	Samir Datta \\
	Chennai Mathematical Institute \\
	\texttt{sdatta@cmi.ac.in}
	\and
	Kushagra Shandilya \\
	Chennai Mathematical Institute \\
	\texttt{kushagras@cmi.ac.in}
}

\date{}
\maketitle

\begin{abstract}
    Catalytic computation reuses potentially large but \emph{in-use} space
    (or catalytic space)
    which must be restored to its initial contents while the \emph{clean} space
    is small. This allows all problems in $\NL$ and even larger classes
    to be in catalytic logspace [Buhrman et al, STOC2014]. In [Chmel et al, CCC2026]
    the catalytic space for reachability was restricted to sublinear
    space (specifically, $\frac{n}{2^{\Theta({\sqrt{\log{n}}})}}$ space)
    while remaining in polynomial time. This matches the non-catalytic bounds
    from [Barnes et al, SICOMP1998].

    We extend the reachability bound to shortest distance, proving a simultaneous
    sublinear catalytic space and polynomial time bound (henceforth
    called, sublinear $\CLP$). This is the first simultaneous sublinear space and
    polynomial time bound for distance even without the catalytic discipline.
    Moreover, our bound works even for polynomially weighted graphs.

    Along the way we prove simultaneous near linear
    catalytic space ($\mcO(n\log^{O(1)}{n})$) and polynomial time bounds for the
    determinant. We also show some applications to planar matching problems.

    Our starting point was catalytic computation of reachability in directed graphs
    by [Cook-Pyne, ITCS2026]. [Edenhofer, arXiv:2602.21088] used it to give a tradeoff
    between catalytic and non-catalytic space, while [Chmel et al, CCC2026]
    compress the graph to a smaller graph preserving $s,t$-reachability to
    achieve a smaller space bound. All three can be modified to yield an algorithm
    for unweighted distance.

    Our contribution is to define an approximation to the walk polynomial in
    weighted graphs that allows us to infer \emph{weighted} $s,t$-distance in
    sublinear $\CLP$. We do so by non-trivially combining elements from each of
    the above three papers.
\end{abstract}

\newpage

\section{Introduction}
Achieving a logspace bound for various problems is the gold standard in
space bounded computation -- achieved in particular for (undirected) connectivity
by Reingold's breakthrough result \cite{Reingold08}. From a reduction to connectivity, this carried
over to a plethora of problems like planarity testing \cite{AM04,DP},
bounded treewidth decomposition and MSO evaluation therein \cite{EJT},
planar, bounded genus and bounded treewidth isomorphism \cite{DLNTW,EK,ES}.
However for problems that were not provably or even believably
in Logspace such as reachability, the problem of giving a good deterministic
space bounded characterization persisted.

Of course, Savitch's theorem could squeeze
reachability into polylogarithmic space but at the expense of quasipolynomial
running time. Rather than compromise on polynomial time, there were attempts
to limit the non-determinism by considering
unambiguous computation and Reinhardt-Allender \cite{RA} were able to prove
that Reachability is in fact in unambiguous logspace if only at the expense of
non-uniformity. Bourke et al \cite{BTV}
were even able to get rid of
the non-uniformity for planar reachability. However non-deterministc computation
even if unambiguous is not quite deterministic.

This clarified what was
required: a polynomial time, deterministic algorithm that worked in low space.
Barnes et al \cite{BBRS} achieved a deterministic
sublinear space algorithm for reachability, that was simultaneously polynomial time bounded.
Here the matters rested for a long time.

Then in a master stroke, building upon an old idea of Ben-Or \& Cleve \cite{BC}, Buhrman et al \cite{BCKLS} introduced the class $\CL$. The idea was to catalyse Logspace with
the use of a large \emph{filled} space that had to be restored to its original
contents before the algorithm terminated. Suddenly we were back in the domain
of deterministic space computation, more so when Kouck\'y et al. \cite{KMPS}
proved that adding non-determinism or randomization to this model does not
make it more powerful. In the original paper itself it was shown that not only
reachability but a significant part of linear algebra and in fact, all of $\TC^1$ are in
$\CL$.

The one flip side of the great power of the class $\CL$ is that it isn't
even known to be polynomially bounded on all inputs i.e. inside $\P$. However
Buhrman et al \cite{BCKLS} were able to prove a $\ZPP$ bound by viewing $\CL$ as a
probabilistic class which is polynomial time bounded in the expectation.

Soon catalytic logspace was populated with problems that had long defied being
classified
as deterministically parallelisable -- the prominent examples being
bipartite matching , linear matroid intersection\footnote{The recent breakthrough of Chatterjee et al \cite{CGGRT26} puts Bipartite Matching and Linear matroid intersection in $\NC$.} and even
general matching \cite{AM,AAV26,CDKMS}. Surprisingly these bound were simultaneously
polynomially bounded i.e. in $\CLP$.

Having achieved $\CLP$, a sceptic would tend to believe that this was where the
story ended.
However, soon enough, contrary to this scepticism the efforts to save
catalytic space
for $\CLP$ problems emerged.
A prominent example is the
$O(\log^2{n})$ bound on catalytic space simultaneous with logarithmic workspace
and polynomial time, for $\BPL$ given by Pyne in \cite{Pyne}. Notice that this
catalytically achieved the upper bound of $\SC^2$ proved by Nisan on $\BPL$
\cite{Nis94}.

A similar bound on reachability would be the natural next step. Preliminary
efforts mimicking Pyne \cite{Pyne} were made in Arvind et al \cite{ACD} using unambiguity to
give a tradeoff between space and catalytic space, which was catalytically
retracing the footsteps of Reinhardt-Allender \cite{RA}.
In the deterministic setting, this effort
is stymied at the outset by the difficulty in improving the space bound of
reachability beyond the Barnes et al \cite{BBRS} bound of slightly
sublinear, $\frac{n}{2^{\Theta(\sqrt{\log{n}})}}$, along with simultaneous
polynomial time even in the non-catalytic setting. Yet, after initial
catalytic space versus time tradeoff results by Cook \& Pyne \cite{CP} and
space versus catalytic space tradeoff results by Edenhofer \cite{Edenhofer} the
effort culminates in the frontier paper of Chmel et al \cite{CDKMR} that achieves the
Barnes et al \cite{BBRS} space versus time tradeoff catalytically i.e. works in sublinear
catalytic space while simultaneously using logarithmic workspace and
polynomial time.

This is where our results enter the fray. We show that not only reachability
but also iterated matrix product of $n$ matrices each of dimension $n$
(this was implicit in \cite{BCKLS}), directed distance in $n$ vertex graphs
and even the determinant of an $n \times n$ matrix can be computed in
near linear $\CLP$ i.e. catalytic logspace using $O(n\log^{O(1)}{n})$ space and
simultaneous polynomial time -- in fact we show a trade off result between workspace and catalytic space, similar to \cite{Edenhofer}.

We then go on to show that distance in $n$ vertex, polynomially bounded edge weighted graphs can be computed in
sublinear $\CLP$ i.e.  $n/2^{\sqrt{\log{n}}}$ catalytic space with simultaneous
logarithmic workspace and polynomial time. Even without catalytic
space, as far as we are aware this is the first sublinear space bound for
weighted distance while staying in polynomial time. Any improvements to this
bound on distance would naturally yield corresponding improvements for
reachability.

For the determinant result our techniques build on Cook-Pyne \cite{CP} along with
an identity that relates the generating function of elementary symmetric
polynomials to the power polynomials
along with polynomial interpolation from Hesse et al \cite{HAB}. For the distance result
we build on the frontiers paper Chmel et al \cite{CDKMR} and combine it with an IMM$_n$
tradeoff \emph{a la} Edenhofer \cite{Edenhofer}.

We also exhibit two matching applications of our results.
The first is search and decision for planar perfect matchings in near linear $\CLP$,
as a consequence of being able to compute the
Pfaffian (analogous to determinant computation) in near linear $\CLP$.
The applications combine our
main results with techniques from various sources \cite{Kastelyn,MSV,ACD}.
The second is for the restriction to planar bipartite graphs, we can further improve the decision and search for perfect matching to sub linear $\CLP$ which uses a technique from Miller-Naor\cite{MN}.

\subsection{Proof Idea}
We discuss the essential ideas behind our results.
\subsubsection{Tradeoff between space and catalytic space}
Cook \& Pyne \cite{CP} give simple catalytic algorithms for directed
reachability that
not only are time efficient but are catalytic space efficient as well, using
roughly quadratic space\footnote{They also describe linear space randomized
	algorithms.}. Their technique (called somewhat picturesquely ``stirring
the pot'' by Nathan Sheffield in an expository essay \cite{Sheff}) consists of
associating a catalytic register with each vertex -- these together constitute
the ``pot''. Now ``stirring'' this consists of translating the value of each
register by the current value of each of its in-neighbours in some arbitrary
order several times. Unstirring is the process of stirring the pot in an
order reverse of the initial one and restores the original values if done
exactly the same number of times as stirring. Their crucial observation is
that if the value of the source register is changed initially then after
stirring, the value obtained at the destination vertex
changes iff there is a path from the source vertex $s$ to the
destination vertex $t$.
Thus, stirring followed by saving the value at $t$ followed by
unstirring followed by changing the value at $s$ by $1$, stirring again
and comparing the stored value at $t$ with the new value
before unstirring and restoring the value at $s$ catalytically reveals the
presence of $s,t$-reachability.

Our first observation is that the stirring technique, applied to a layered version of the original graph, computes a polynomial (if we offset the value at a vertex by $x$ times each predecessor's
value) that we call the \emph{stirring polynomial} and further this is identical
with the generating function of walks aka \emph{walk polynomial} in the original graph.

The stirring technique in \cite{CP} has been developed by Edenhofer
\cite{Edenhofer}
to give a work space vs. catalytic space tradeoff for reachability as well.
We mimic their construction to give a similar tradeoff for matrix powering
(which is basically a tabulation of walk polynomials)
and generalise it to iterated matrix product or equivalently  branching
program evaluation. This tradeoff yields a near linear catalytic space
simulation of the matrix power of $n$ while restricting to logarithmic work
space and polynomial time\footnote{Henceforth abbreviated to near linear
	$\CLP$.}. Our next observation is that this yields determinant
of an $n \times n$ matrix in near linear $\CLP$ using an identity
relating the generating function of elementary symmetric polynomials the
power polynomials.

On the other hand we show that it also yields a polynomial time and
logarithmic space algorithm for powering matrices to small powers,
$2^{\Theta(\sqrt{\log{n}})}$ while staying in sublinear catalytic space,
$\mcO(\frac{n}{2^{\Theta(\sqrt{\log{n}})}})$ \footnote{Henceforth
	any unqualified use of the word sublinear will refer to
	$\mcO(\frac{n}{2^{\Theta(\sqrt{\log{n}})}})$ and short/small will refer to
	$2^{\Theta(\sqrt{\log{n}})}$.}. Notice that short distances
can be computed in these bounds as well since the smallest
non-zero exponent in a walk polynomial of a small power will yield that.

\subsubsection{Sublinear space algorithms}
Next we turn to the recent publication \cite{CDKMR} that leverages the
techniques in \cite{CP} to give a sublinear  $\CLP$ bound for
reachability mirroring similar uncatalysed bound from \cite{BBRS} which
has not been improved for nearly three decades. \cite{CDKMR} construct a
sparsified version of the input graph where linear length walks are replaced
by sublinear walks preserving reachability where the edges
are related to walks of length $2^{\Theta(\sqrt{\log{n}})}$.

We are able to modify the construction of \cite{CDKMR} so that
we store distance information on the short-walk-edges. Since we already had
(weighted) distance in near linear space this works out to sublinear space
when run on the sparsified graph of \cite{CDKMR}. This includes being able
to compute the weights corresponding to the distances in short walks in
sublinear space.

This yields the first sublinear $\CLP$ algorithm for distance
in a weighted (polynomially bounded) directed graph. In fact, this is true even if we ignore the word
catalytic; it remains the first sublinear space algorithm for distance in weighted
digraphs which is simultaneously in polynomial time.

\subsection{Applications}
We now turn to some applications of our results concerning matrix powering and distance. We first establish determinant computation in near linear $\CLP$ using an identity which relates generating function of elementary symmetric polynomials and power polynomials. In a way similar to the determinant we show Pfaffian computation of a matrix in near linear $\CLP$.

We also show two graph matching applications. The first is the search and decision version of planar perfect matching in near linear $\CLP$. The weighted decision version uses a space efficient version of Kastelyn's algorithm and Pfaffian computation based on \cite{MSV}. The search version is handled using a careful application of catalytic isolation \cite{ACD}.

We also show planar bipartite perfect matching in sublinear $\CLP$. For this we use the Miller and Naor algorithm \cite{MN}, that reduces computation of flows to distances in planar graphs.

\subsection{Organization}
We start with preliminaries in \Cref{sec:prelims}. We establish space-space tradeoff results for matrix powering and branching program in \Cref{sec:mpbp}. Using these results we prove weighted distance in sublinear $\CLP$ in \Cref{sec:subd}. In \Cref{sec:app}, we give applications concerning determinant and matchings in graphs. Some of the proofs in \Cref{sec:mpbp,sec:subd,sec:app} have been moved to the Appendix.

\section{Preliminaries} \label{sec:prelims}

\begin{definition}[Catalytic Computation]
        A catalytic Turing machine (CTM) $M$ works as follows. The machine $M$ consists of three tapes: a read-only input tape, a read-write work tape and a read-write catalytic tape containing some arbitrary string $\tau$. For some language $L$ and input $x$, $M$ halts, outputs $x \in L$ and the catalytic tape contains $\tau$.

        A language $L \in \CTISP[t, w, c]$ if there exists a CTM which decides membership in $L$ and takes $t$ time, $w$ bits for work tape and $c$ bits for catalytic tape.

        A language $L$ is in Catalytic Logspace, denoted by $\CL$, if the CTM for $L$, for input of size $n$, takes $\mcO(\log n)$ work space and $\mcO(n^{\mcO(1)})$ catalytic space.

        We denote $\CLP$, sublinear $\CLP$ and near-linear $\CLP$ as the set of languages which are in $\CL$, their corresponding machines runs in polynomial time and takes catalytic space $n^{\mcO(1)}$, $\dfrac{n}{2^{\Theta(\sqrt {\log n})}}$ and $n \log^{\mcO(1)}n$ respectively.
\end{definition}

\begin{definition}[Pfaffian]
        Let $A$ be a skew symmetric matrix of even size $n \times n$ which is equivalent to an orientation of an undirected graph $G$ over $n$ vertices. We define \emph{perfect matching} $M$ to be a collection of edges as $M = \{(\opsigma(1), \opsigma(2)), (\opsigma(3), \opsigma(4)) \dots, (\opsigma(n-1), \opsigma(n))\}$ where $\opsigma \in S_n$.

        Since $A$ is a skew symmetric matrix, many permutations corresponds to the same perfect matching. Thus, for any perfect matching $M$, the canonical permutation, $\opsigma_M$, is the permutation in which edges are from smaller to larger vertices and edges are listed in increasing order of smaller vertices in each edge, i.e., $\opsigma_M(2i-1) < \opsigma_M(2i)$ for $i = 1, 2, \dots, n/2$, and $\opsigma_M(1) < \opsigma_M(3) < \dots < \opsigma_M(n-1)$. We define $\operatorname{wt}(\opsigma_M) = \prod_{i=1}^{n/2}A[\opsigma_M(2i-1), \opsigma_M(2i)]$.

        The Pfaffian of $A$ (or equivalently $G$) is defined as
        \[
                \pf(A) = \sum_{M \in \mathcal{M}} \operatorname{sgn}(\opsigma_M) \operatorname{wt}(\opsigma_M),
        \]
        where $\mathcal{M}$ is the set of all perfect matchings.

\end{definition}
Cayley's theorem states that for a skew symmetric matrix $A$ of even size, $\det(A) = (\pf(A))^2$. This gives us $\abs{\pf(A)} = (\det(A))^{\frac{1}{2}}$.

\section{Matrix Powering and Branching Programs} \label{sec:mpbp}
In this section we give space-space tradeoff algorithms for matrix powering and branching programs.

\subsection{Matrix Powering}
We first present the space-space tradeoff result for matrix powering.

\begin{theorem}\label{thm:apow}
	Given an $n\times n$ matrix $A$, and indices $s, t \in \Z_n$, a parameter $\ell \in \N^+$ and a parameter $k \in[1, n]$, there is an algorithm running in time $\poly(n,(3k)^{\log \ell})$.
	, work space $\mcO(\log \ell\cdot \log{(3k)}+\log n)$ and catalytic space $\mcO(\frac n k \cdot \log n \cdot \log \ell)$, that outputs $A^\ell[s,t]$. In particular, For any $k \in[1, n]$, $A^\ell[s,t]$ can be computed in
	
	$$
		\C\TISP\left[\poly(n,(3k)^{\log \ell}),\mcO(\log \ell \cdot \log{(3k)}+\log n), \mcO\left(\frac{n}{k} \cdot \log n \cdot \log \ell \right)\right].
	$$
\end{theorem}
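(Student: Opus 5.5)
We follow Edenhofer's space versus catalytic space tradeoff for reachability, replacing Boolean reachability by arithmetic over a ring $\Z_p$ (or a small collection of primes recombined by CRT) large enough to hold entries of $A^\ell$. The base primitive is the Cook--Pyne "stirring" routine on catalytic registers. Fix a block of $m = \ceil{n/k}$ vertices and a register $r_v$ of $\mcO(\log n)$ bits for each. Consider a step that adds $\sum_u A[u,v]\, r_u$ (using the old values, via a layered copy) into the registers. This is an invertible linear map on the registers, and running it backwards restores them. So the difference between the output with $r_s$ incremented by $1$ and the output without it is exactly a linear functional of $A$ applied to $e_s$. Running this twice, once with and once without the perturbation, extracts a value catalytically. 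At each stage we only record one field element on the work tape.

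The recursion is the key step. Write $A^{\ell}[s,t] = \sum_{w} A^{\ell/2}[s,w]\,A^{\ell/2}[w,t]$, and split the $n$ intermediate vertices $w$ into $k$ blocks of size $n/k$. I will arrange the recursion so that one level handles a block at a time. The catalytic registers for that block, $\mcO(\frac nk\log n)$ bits, hold a linear combination $\tau_w + c\cdot A^{\ell/2}[s,w]$, obtained recursively by "compute, add into register". A second recursive call then contracts this against $A^{\ell/2}[w,t]$, adding into the output register. Finally the first call is inverted to uncompute.

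Each level thus makes a constant number of recursive calls per block. Counting compute, contract and uncompute, with the blocks visited sequentially, this gives the branching factor $3k$. The recursion depth is $\log \ell$ (for non-powers of two, use the binary expansion of $\ell$ with one extra multiplication per level). Time is therefore $\poly(n)\cdot(3k)^{\log\ell}$.

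Work space consists of a stack recording, per level, which of the $3k$ subcalls is active, giving $\mcO(\log\ell\cdot\log(3k))$ bits, plus $\mcO(\log n)$ for indices and the current field element. Catalytic space is one block of registers per level, giving $\mcO(\frac nk\log n\log\ell)$ bits.

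The main obstacle is making the "add $A^{\ell/2}[s,w]$ into $r_w$ for all $w$ in a block" operation reversible and correct while the lower level itself uses the dirty catalytic registers of deeper levels. The danger is that inner values get contaminated by the arbitrary initial contents $\tau$. I would handle this with the standard Buhrman et al. trick at each level: perform the operation, record the target, invert it, perturb, and redo it, so that the $\tau$-dependent terms cancel in the difference. I would then verify by induction on depth that each call acts on its block as $r \mapsto r + c\cdot v$, with $v$ the exact desired vector, and leaves all other catalytic content untouched. Polynomial interpolation in the perturbation scalar $c$ is not needed because the maps are affine.
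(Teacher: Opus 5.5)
\textbf{Verdict: genuine gap.} Your recursion and your resource accounting match the paper's program $P^\ell_{i,j}$: for each of the $k$ intermediate residue classes $m$, compute into $W_r$, contract into $V$, then uncompute. The gap is in the step you single out as the main obstacle. Your mechanism for it does not fit the bounds, and the invariant you plan to prove is false.

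\textbf{Why your mechanism fails.} Doing ``perform, record the target, invert, perturb, redo'' inside every call means recording the target of an inner call. That target is a whole block of $n/k$ registers, so it costs $\Theta(\frac nk\log n)$ work bits rather than $\mcO(\log\ell\log(3k)+\log n)$. Separately, the invariant you plan to verify by induction is that each call acts on its block exactly as $r\mapsto r+c\cdot v$. This is false for the three-call program whose branching you counted. The contract call $P^{\lfloor\ell/2\rfloor}_{m,j}(W_r,V,\ldots)$ reads the unknown initial content $\sigma$ of $W_r$ and adds $\sigma$ contracted against $A^{\lfloor\ell/2\rfloor}$. The compute call likewise adds terms that do not depend on $c$. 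Nothing in compute/contract/uncompute removes these terms.

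\textbf{The missing idea: no cleaning is needed inside the recursion.} Let $\alpha^{(i,j)}_{\ell,v}(\tau)$ be the amount added to $V[\lambda_v]$ when the input block holds $\tau$, with all other blocks fixed arbitrarily. Prove, as in \Cref{claim:cl11}, the difference invariant
$\alpha^{(i,j)}_{\ell,v}(\tau+b)-\alpha^{(i,j)}_{\ell,v}(\tau)=\sum_{u\equiv i \bmod k}b_{\lambda_u}A^\ell[u,v]$
for \emph{every} vector $b$. The inductive step goes as follows.
\begin{enumerate}
\item $\sigma$ only shifts the argument of the outer call by a constant.
\item Apply the hypothesis to the outer call with perturbation $\alpha^{(i,m)}_{\lceil\ell/2\rceil,\cdot}(\tau+b)-\alpha^{(i,m)}_{\lceil\ell/2\rceil,\cdot}(\tau)$, which is an arbitrary vector. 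Then apply it to the inner call. This is why the statement must be about a block-to-block program $P^\ell_{i,j}$ with general vector input, not about a scalar $c$ at a fixed source $s$.
\item Use that every call restores all blocks except its output, so both evaluations see the same deeper contents.
\end{enumerate}
All garbage then cancels exactly once, in the top-level pair of runs with $c\in\{0,1\}$. That step records only a single field element.

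\textbf{Alternative if you want exactly clean calls.} At every level, add a fourth call per $m$: run $P^{\lfloor\ell/2\rfloor}_{m,j}(W_r,V,\ldots)^{-1}$ after $W_r$ has been uncomputed. This subtracts the $\sigma$-term. By induction, each call then adds exactly the input block times the corresponding restriction of $A^\ell$. Branching $4k$ still stays within $\poly(n,(3k)^{\log\ell})$ time and $\mcO(\log\ell\log(3k)+\log n)$ work space.
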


Let $k \in [1, n]$ be a parameter. We partition the set of vertices, $\Z_n$, into $k$ parts each of size $n/k$ (assume, for brevity that $k$ divides $n$) as $\Z_n=\sqcup_{i\in\bbZ_k}\{v\in \Z_n:v\equiv i\mod k\}$. Each index $v$ can be uniquely written as $\lambda_vk+r_v$ where $\lambda_v=\floor{v/k},r_v\in \bbZ_k$.

Fix a prime modulus $q$ with $\log q = \mcO(\log n)$. The algorithm will use
$\lceil \log \ell \rceil + 2$ blocks of registers each on the catalytic tape, denoted
$U, V, W_1, \ldots, W_{r = \lceil \log \ell \rceil}$, and each block will contain $n/k$ registers. Each register stores an element of $\mathbb{Z}_q$ as in the encoding defined in \cite[Appendix~A]{Edenhofer} and so the total catalytic tape is $\mcO((n/k) \log n \log \ell)$ bits. 

Consider the register programs $P^\ell_{i,j}$ for $\ell \in \bbN^+, i,j \in \{0, \dots, k-1\}$, as described in \Cref{nl:P23}.
\begin{algorithm}
	\caption{$P^{\ell}_{i,j}(U,V,W_1,\ldots,W_{r=\ceil{\log \ell}})$ for $i,j \in \{0, \dots, k-1\}$}\label{nl:P23}
	\begin{algorithmic}[1]
		\State \textbf{Input:} $A$.
		\If {$\ell=1$}
		\For {$u,v\in\Z_n$ such that $r_u=i,r_v=j$}
		\State $V[\lambda_v]=V[\lambda_v]+A[u,v]\cdot U[\lambda_u] \pmod{q}$
		\EndFor
		\Else
		\For {$m \in \{0, \dots, k-1\}$}
		\State $P^{\ceil{\ell/2}}_{i,m}(U,W_r,W_1,\ldots,W_{r-1})$ \Comment{Call this step $1$}
		\State $P^{\floor{\ell/2}}_{m,j}(W_r,V,W_1,\ldots,W_{r-1})$ \Comment{Step $2$}
		\State $P^{\ceil{\ell/2}}_{i,m}(U,W_r,W_1,\ldots,W_{r-1})^{-1}$\Comment{Step $3$: The inverse of step $1$ i.e line $8$.}
		\EndFor
		\EndIf
	\end{algorithmic}
\end{algorithm}

The programs $P^{\ell}_{i,j}$ are reversible. $(P^{\ell}_{i,j})^{-1}$ can be obtained by two changes in the above algorithm, replacing "+" with "-" in line 4 and $P^{\floor{\ell/2}}_{m,j}$ with $(P^{\floor{\ell/2}}_{m,j})^{-1}$ in line 9.

Fix $\ell \in \mathbb{N}^+$ and $i, j \in \{0, \dots, k-1\}$. For a vector $\tau=\left(\tau_0, \ldots, \tau_{(n / k)-1}\right) \in \mathbb{Z}_q^{n / k}$ and an index $v \in \Z_n$ with $v \equiv j(\bmod k)$, let $\alpha_{\ell, v}^{(i, j)}(\tau) \in \mathbb{Z}_q$ denote the value added to the register $V\left[\lambda_v\right]$ by executing $P_{i, j}^{(\ell)}\left(U, V, W_1, \ldots, W_r\right)$ when the input block $U$ is initialized so that for every index $u \equiv i(\bmod k), U\left[\lambda_u\right]=\tau_{\lambda_u}$.

\begin{claim} \label{claim:matpowregunchanged}
	For every $\ell \in \bbN^+$ and $i, j \in \{0, \dots, k-1\}$, executing $P^{\ell}_{i,j}(U,V,W_1,\ldots,W_r)$ leaves all registers unchanged except those in block $V$.
\end{claim}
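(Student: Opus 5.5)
We argue by strong induction on $\ell$, simultaneously for all $i,j \in \{0,\dots,k-1\}$ and for every assignment of blocks to the formal parameters of $P^{\ell}_{i,j}$. The claim is about the roles the blocks play, so we read it as saying that $P^{\ell}_{i,j}(X,Y,Z_1,\ldots,Z_{r})$ changes only the block passed in the second argument position. The same statement for the inverse program $(P^{\ell}_{i,j})^{-1}$ then follows for free. That program executes the inverse instructions of $P^{\ell}_{i,j}$ in reverse order, so it touches exactly the same set of registers as $P^{\ell}_{i,j}$.

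\emph{Base case $\ell=1$.} Line 4 is the only instruction executed, and it writes only to $V[\lambda_v]$. It reads $U[\lambda_u]$ and the input $A$, but modifies nothing else. Hence every register outside $V$ is unchanged, and in fact the blocks $W_1,\ldots,W_r$ are not even read.

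\emph{Inductive step $\ell\ge 2$.} Fix one iteration $m$ of the loop. Step 1 is $P^{\ceil{\ell/2}}_{i,m}(U,W_r,W_1,\ldots,W_{r-1})$. By the induction hypothesis, applicable because $\ceil{\ell/2}<\ell$, it changes only $W_r$. Write $W_r \mapsto W_r+\delta$, where $\delta$ depends only on the contents of $U$ (and of $W_1,\dots,W_{r-1}$), and all of these are unchanged by step 1. Step 2 is $P^{\floor{\ell/2}}_{m,j}(W_r,V,W_1,\ldots,W_{r-1})$, which by induction changes only $V$. In particular it leaves $U, W_1,\ldots,W_r$ intact. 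Step 3 runs the inverse of step 1 on the same arguments. By reversibility it undoes exactly the transformation of step 1, provided the registers it reads, namely $U$ and $W_1,\ldots,W_{r-1}$, hold the same values as when step 1 ran. They do, since neither step 1 nor step 2 altered them. So step 3 maps $W_r+\delta$ back to $W_r$ and changes nothing else. After each iteration, therefore, only $V$ has possibly changed, and composing over $m=0,\dots,k-1$ gives the claim.

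\emph{Main obstacle.} The delicate point is justifying that step 3 really cancels step 1. This needs a stronger hypothesis than ``only $V$ changes.'' We also need that the change made to the output block is a function of the \emph{other} blocks only, which is not altered by the change itself. We will therefore strengthen the induction statement as follows: $P^{\ell}_{i,j}$ acts as $V\mapsto V+f(U,W_1,\ldots,W_r)$ for some function $f$, and $(P^{\ell}_{i,j})^{-1}$ acts as $V\mapsto V-f(U,W_1,\ldots,W_r)$.

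The base case of the strengthened statement is immediate from line 4, since $A[u,v]U[\lambda_u]$ does not depend on $V$. For the inductive step, note that $V$ is never an input to the recursive calls, so the increments to $V$ depend only on $U$, the $W$'s and $A$. This strengthened form makes step 3 an exact undo of step 1 and carries the induction through. For the inverse program, one checks that negating line 4 and inverting step 2 reverses each iteration in turn. This uses the fact that steps 1 and 3 are mutual inverses wrapped around step 2, so iterations commute appropriately; indeed each iteration just adds a quantity to $V$, and additions commute.
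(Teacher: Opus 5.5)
Your proof is correct and matches the argument the paper defers to (Edenhofer's Claim 3): induction on $\ell$, where step 3 cancels step 1 because step 1 never touches $V$ and, by induction, step 2 changes only $V$. Your strengthened invariant, that $P^{\ell}_{i,j}$ and $(P^{\ell}_{i,j})^{-1}$ act as $V\mapsto V\pm f(U,W_1,\ldots,W_r)$, is a useful addition because it is what justifies the paper's syntactic inverse, which negates line 4 and inverts step 2 but does not reverse the loop over $m$ (contrary to your opening remark); state explicitly that it relies on the blocks passed to each call being pairwise distinct, which holds throughout the recursion.
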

The proof of the above follows the same as of \cite{Edenhofer} Claim 3, thus we refer to that for more details.

\begin{claim}[Matrix powering analogue of \cite{Edenhofer} Claim 9]\label{claim:cl11}
	
	Let $A \in \mathbb{Z}_q^{n \times n}$. For every $\ell \in \mathbb{N}^+$, $i,j \in \{0, \dots, k-1\}$, vectors $\tau, b \in \mathbb{Z}_q^{n/k}$, and every vertex $v \in \Z_n$ with $v \equiv j \pmod{k}$,
	\[
		\alpha_{\ell, v}^{(i, j)}(\tau+b)-\alpha_{\ell, v}^{(i, j)}(\tau)
		\equiv
		\sum_{\substack{u \in \Z_n :\\ u \equiv i \,(\mathrm{mod}\, k)}}
		b_{\lambda_u}\,(A^\ell)_{u,v}
		\pmod{q}.
	\]
	
	In particular, if $b_{\lambda_s}=1$ and $b_{\lambda}=0$ for all $\lambda\ne \lambda_s$, then
	\[
		\alpha_{\ell, v}^{(i=r_s, j)}(\tau+b)-\alpha_{\ell, v}^{(i=r_s, j)}(\tau)
		\equiv (A^\ell)_{s,v} \pmod{q}.
	\]
\end{claim}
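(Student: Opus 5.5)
We prove the claim by strong induction on $\ell$, for all $i,j$ and all initial contents of the catalytic blocks simultaneously. The statement we carry through the induction is a bit stronger: executing $P^{\ell}_{i,j}$ adds to each $V[\lambda_v]$ (with $v\equiv j$) the quantity
\[
\alpha^{(i,j)}_{\ell,v}(\tau)=\sum_{u\equiv i}\tau_{\lambda_u}(A^\ell)_{u,v}+c^{(i,j)}_{\ell,v},
\]
where the offset $c^{(i,j)}_{\ell,v}$ may depend on the initial contents of $W_1,\dots,W_r$ and $V$, but not on $\tau$. In other words, the added value is affine in $\tau$ with linear part $\tau\mapsto\sum_u\tau_{\lambda_u}(A^\ell)_{u,v}$. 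Once this is shown, taking differences gives the main identity, and substituting the unit vector $b$ gives the "in particular" part, since the only surviving term is $u=s$ (indeed $\lambda_u=\lambda_s$ together with $u\equiv r_s$ forces $u=s$).

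For the base case $\ell=1$, line 4 adds exactly $\sum_{u\equiv i}A[u,v]U[\lambda_u]$ to $V[\lambda_v]$. The register $U$ is never modified, and the updates to different $V$ registers do not interact. So the added value is linear in $\tau$ with coefficients $A_{u,v}$, and the offset is zero.

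For the inductive step, fix an iteration $m$ and let $\omega$ denote the contents of $W_r$ just before step 1.
\begin{itemize}
\item By \Cref{claim:matpowregunchanged}, step 1 changes only $W_r$. By induction it adds to $W_r[\lambda_w]$, for $w\equiv m$, the amount $\sum_{u\equiv i}\tau_{\lambda_u}(A^{\ceil{\ell/2}})_{u,w}+c_w$, where $c_w$ does not depend on $\tau$.
\item Step 2 reads $W_r$ as its input block. By induction it adds to $V[\lambda_v]$ an amount that is affine in the new contents of $W_r$, with linear coefficients $(A^{\floor{\ell/2}})_{w,v}$ and an offset independent of $W_r$.
\item Step 3 is the exact inverse of step 1. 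The registers step 1 reads ($U,W_1,\dots,W_{r-1}$) are unchanged by step 2, since step 2 writes only $V$. Hence step 3 restores $W_r$ to $\omega$.
\end{itemize}
Composing these, iteration $m$ adds to $V[\lambda_v]$ the amount
\[
\sum_{w\equiv m}\Bigl(\omega_{\lambda_w}+c_w+\sum_{u\equiv i}\tau_{\lambda_u}(A^{\ceil{\ell/2}})_{u,w}\Bigr)(A^{\floor{\ell/2}})_{w,v}+c'.
\]
This is affine in $\tau$ with linear part $\sum_u\tau_{\lambda_u}\sum_{w\equiv m}(A^{\ceil{\ell/2}})_{u,w}(A^{\floor{\ell/2}})_{w,v}$.

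Summing over $m=0,\dots,k-1$, the classes $w\equiv m$ partition $\Z_n$. The linear part therefore becomes $\sum_u\tau_{\lambda_u}(A^{\ceil{\ell/2}}A^{\floor{\ell/2}})_{u,v}=\sum_u\tau_{\lambda_u}(A^\ell)_{u,v}$.

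The step needing the most care is independence of the offsets from $\tau$. Within the recursion the state of $W_1,\dots,W_{r-1}$ at the start of each call is always the original one, because every call restores everything except its output block. So those offsets are fixed. The only register whose contents depend on $\tau$ is $V$, and $V$ only ever receives additions. In particular $V$ is never read, since in each recursive call $V$ occupies the output position and never the input position. All arithmetic is mod $q$.
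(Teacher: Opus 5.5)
Your proof is correct and takes essentially the same route as the paper. Both argue by induction on $\ell$, split each iteration $m$ into the compute/use/uncompute steps on $W_r$, apply the inductive hypothesis to each recursive call, and sum over the residue classes $m$ to recover $A^{\lceil \ell/2\rceil}A^{\lfloor \ell/2\rfloor}=A^\ell$. Your affine formulation (a fixed linear part plus a $\tau$-independent offset, justified because $V$ is never read and every call restores its auxiliary blocks) makes explicit the well-definedness that the paper's recursion for $\alpha^{(i,j)}_{\ell,v}$ in terms of the initial contents $\sigma$ of $W_r$ leaves implicit.
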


\begin{proof}
	We proceed by induction on $\ell$.
	
	\textbf{Base case $\ell=1$.}
	By definition of $P^{1}_{i,j}$, for every $v$ with $r_v=j$,
	\[
		\alpha^{(i,j)}_{1,v}(\tau)
		=
		\sum_{\substack{u \in \Z_n :\\ r_u = i}}
		A[u,v]\cdot \tau_{\lambda_u}.
	\]
	Hence,
	\[
		\alpha^{(i,j)}_{1,v}(\tau+b) - \alpha^{(i,j)}_{1,v}(\tau)
		=
		\sum_{\substack{u \in \Z_n \\ r_u = i}}
		b_{\lambda_u} A[u,v].
	\]

	\textbf{Inductive step.}
	Assume the statement holds for all lengths strictly smaller than $\ell \ge 2$, and let
	$r=\lceil \log \ell \rceil$.
	
	Fix $m \in \{0, \dots, k-1\}$. By definition of $P^{\ell}_{i,j}$, the first recursive call
	$
		P^{\lceil \ell/2 \rceil}_{i,m}(U,W_r, 	W_1,\ldots,W_{r-1})
	$
	produces intermediate contributions stored in $W_r$ corresponding to length-$\lceil \ell/2 \rceil$
	matrix propagation from residue $i$ to residue $m$. The second call
	$
		P^{\lfloor \ell/2 \rfloor}_{m,j}(W_r,V, W_1,\ldots,W_{r-1})
	$
	uses $W_r$ to add contributions corresponding to length-$\lfloor \ell/2 \rfloor$ propagation
	from residue $m$ to residue $j$. The third call uncomputes $W_r$ and does not affect $V$.
	
	Let $\sigma \in \mathbb{Z}_q^{n/k}$ denote the initial contents of $W_r$. Then
	\[
		\alpha^{(i,j)}_{\ell,v}(\tau)
		\equiv
		\sum_{m \in \Z_k}
		\alpha^{(m,j)}_{\lfloor \ell/2 \rfloor, v}
		\big(\sigma + \alpha^{(i,m)}_{\lceil \ell/2 \rceil,\cdot}(\tau)\big)
		\pmod{q}.
	\]
	where $\alpha^{(i,m)}_{\lceil \ell/2 \rceil,\cdot}(\tau)$ -- notice the dot in the subscript of $\alpha^{(i,m)}_{\lceil \ell/2 \rceil,\cdot}(\tau)$ -- denotes a vector in
	$\mathbb{Z}_q^{n/k}$ indexed by the block variable $\lambda_u \in [n/k]$. Therefore,
	\begin{align*}
		\alpha^{(i,j)}_{\ell,v}(\tau+b) - \alpha^{(i,j)}_{\ell,v}(\tau)
		 & \equiv
		\sum_{m \in \Z_k}
		\Big(
		\alpha^{(m,j)}_{\lfloor \ell/2 \rfloor, v}
		(\sigma + \alpha^{(i,m)}_{\lceil \ell/2 \rceil,\cdot}(\tau+b)) \\
		 & \quad -
		\alpha^{(m,j)}_{\lfloor \ell/2 \rfloor, v}
		(\sigma + \alpha^{(i,m)}_{\lceil \ell/2 \rceil,\cdot}(\tau))
		\Big)
		\pmod{q}.
	\end{align*}
	
	Applying the induction hypothesis to the outer level gives
	\[
		\equiv
		\sum_{m \in \Z_k}
		\sum_{\substack{u \in \Z_n :\\ r_u = m}}
		\Big(
		\alpha^{(i,m)}_{\lceil \ell/2 \rceil, u}(\tau+b)
		-
		\alpha^{(i,m)}_{\lceil \ell/2 \rceil, u}(\tau)
		\Big)
		(A^{\lfloor \ell/2 \rfloor})_{u,v}
		\pmod{q}.
	\]
	
	Applying the induction hypothesis again to the inner term yields
	\[
		\equiv
		\sum_{m \in \Z_k}
		\sum_{\substack{u \in \Z_n :\\ r_u = m}}
		\sum_{\substack{w \in \Z_n :\\ r_w = i}}
		b_{\lambda_w}
		(A^{\lceil \ell/2 \rceil})_{w,u}
		(A^{\lfloor \ell/2 \rfloor})_{u,v}
		\pmod{q}.
	\]
	
	Reordering sums, we get
	\[
		\equiv
		\sum_{\substack{w \in \Z_n :\\ r_w = i}}
		b_{\lambda_w}
		\sum_{u \in \Z_n}
		(A^{\lceil \ell/2 \rceil})_{w,u}
		(A^{\lfloor \ell/2 \rfloor})_{u,v}
		\pmod{q}.
	\]
	
	Using matrix multiplication,
	\[
		\sum_{u \in \Z_n}
		(A^{\lceil \ell/2 \rceil})_{w,u}
		(A^{\lfloor \ell/2 \rfloor})_{u,v}
		=
		(A^\ell)_{w,v}.
	\]
	
	Thus,
	\[
		\alpha^{(i,j)}_{\ell,v}(\tau+b) - \alpha^{(i,j)}_{\ell,v}(\tau)
		\equiv
		\sum_{\substack{w \in \Z_n :\\ r_w = i}}
		b_{\lambda_w}(A^\ell)_{w,v}
		\pmod{q},
	\]
	completing the proof.
\end{proof}

We now give the algorithm for computing $A^\ell[s, t]$ and subsequently prove \Cref{thm:apow}.

\begin{algorithm}
	\caption{$A^\ell[s,t]$ modulo $q$}\label{nl:alg23}
	\begin{algorithmic}[1]
		\For{$c \in \{0,1\}$}
		\State $U[\lambda_s] \leftarrow U[\lambda_s] + c$
		\State run $P^{\ell}_{r_s,r_t}(U,V,W_1,\ldots,W_r)$
		\State $\alpha_c \leftarrow V[\lambda_t]$
		\State run $P^{\ell}_{r_s,r_t}(U,V,W_1,\ldots,W_r)^{-1}$
		\State $U[\lambda_s] \leftarrow U[\lambda_s] - c$
		\EndFor
		\State return $\alpha_1 - \alpha_0 \pmod{q}$
	\end{algorithmic}
\end{algorithm}

\begin{proof}[Proof of \Cref{thm:apow}]
	For a fixed parameter $k$, and prime $q$, notice that \Cref{nl:alg23} computes $A^{\ell}[s,t]\pmod{q}$ using catalytic space $(n/k)\log \ell \log q=\mcO((n/k)\log{\ell} \log n)$. Further, observe that the recursion in \Cref{nl:P23} has branching $k$, and recursion depth $\log \ell$, thus the time taken is $\poly(n,(3k)^{\log \ell})$, and workspace used is $\mcO(\log n+ \log \ell \log{(3k)})$. Computing $A^\ell[s,t]\pmod{q}$ for various primes $q$, and applying chinese remaindering in logspace through \cite{HAB}, gives us the desired result.
\end{proof}

\subsection{Branching Program}
The space-space tradeoff result for branching program is as follows.

\begin{theorem}[Catalytic evaluation of branching programs]\label{thm:abp}
	Let $G$ be a layered branching program of length $\ell$ over vertex set $\Z_n$, where each layer $\ell_i \in \{1, \dots ,\ell\}$ is described by a transition matrix
	$
		A^{(\ell_i)} \in \mathbb{Z}_q^{n\times n}.
	$
	Given start state $s$ and target state $t$, and a parameter $k \in [1, n]$, there is an algorithm computing the weight
	\[
		\sum_{\pi: s \leadsto t} \prod_{r=1}^{\ell} A^{(r)}_{\pi_{r-1},\pi_r}
		\;=\;
		(A^{(1)}A^{(2)}\cdots A^{(\ell)})[s,t]
	\]
	in time $\poly(n,(3k)^{\log \ell})$, work space $\mathcal{O}(\log n + \log (3k) \cdot \log \ell)$, and catalytic space $\mathcal{O}\!\left(\frac{n}{k}\log n \log \ell \right)$.

	\begin{algorithm}
		\caption{$P^{\ell}_{i,j}(U,V,W_1,\ldots,W_r)[A^{(1..\ell)}]$ for layered branching programs}
		\label{alg:BP-P}
		\begin{algorithmic}[1]
			\State \textbf{Input:} layer sequence $A^{(1)},\dots,A^{(\ell)}$
			\If {$\ell=1$}
			\For {$u,v \in \Z_n$ with $r_u=i,\ r_v=j$}
			\State $V[\lambda_v] \leftarrow V[\lambda_v] + A^{(1)}[u,v]\cdot U[\lambda_u] \pmod{q}$
			\EndFor
			\Else
			\For {$m \in \Z_k$}
			\State run $P^{\lceil \ell/2\rceil}_{i,m}( U, W_r,W_1,\ldots,W_{r-1})[A^{(1..\ell_1)}]$\Comment{$\ell_1=\ceil{\ell/2}$}
			\State run $P^{\lfloor \ell/2\rfloor}_{m,j}( W_r, V, W_1, \ldots,W_{r-1})[A^{(\ell_1+1..\ell)}]$
			\State uncompute $W_r$
			\EndFor
			\EndIf
		\end{algorithmic}
	\end{algorithm}
	
\end{theorem}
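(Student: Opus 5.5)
The plan is to mirror the proof of \Cref{thm:apow}, replacing the single matrix $A$ by the layer sequence $A^{(1)},\dots,A^{(\ell)}$. For the segment of layers $a+1,\dots,b$ write $M_{[a,b]}=A^{(a+1)}\cdots A^{(b)}$. Each recursive call of \Cref{alg:BP-P} carries the indices $(a,b)$ of the contiguous segment it handles: the top call has $(0,\ell)$, and a call on $(a,b)$ with $b-a\ge 2$ splits at $a+\lceil (b-a)/2\rceil$. These indices cost $\mcO(\log \ell)$ bits per level, or alternatively the workspace can hold the bit-string of left/right choices along the recursion path, from which $(a,b)$ is recomputed.

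\emph{Reversibility and cleanliness.} The analogue of \Cref{claim:matpowregunchanged} holds with the same proof. The base case only adds multiples of registers in $U$ into $V$, so it is inverted by replacing $+$ with $-$. In the recursive case, step 3 is literally the inverse of step 1 on the same segment of layers. By induction, $W_r$ is restored after each iteration of $m$, and only $V$ changes.

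\emph{Correctness.} Define $\alpha^{(i,j)}_{[a,b],v}(\tau)$ as before and prove the analogue of \Cref{claim:cl11}:
\[
\alpha^{(i,j)}_{[a,b],v}(\tau+b')-\alpha^{(i,j)}_{[a,b],v}(\tau)\equiv\sum_{u:\,r_u=i} b'_{\lambda_u}(M_{[a,b]})_{u,v}\pmod q .
\]
The proof is by induction on $b-a$. The base case is immediate. In the inductive step, let $c=a+\lceil (b-a)/2\rceil$. The same chain of equalities goes through verbatim, replacing $A^{\lceil\ell/2\rceil}$ by $M_{[a,c]}$ and $A^{\lfloor\ell/2\rfloor}$ by $M_{[c,b]}$. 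Two facts are needed. First, the contribution of step 2 depends only on the difference induced in $W_r$, so the unknown initial content $\sigma$ cancels. Second, summing over $m\in\Z_k$ covers all intermediate vertices $u$, giving $\sum_u (M_{[a,c]})_{w,u}(M_{[c,b]})_{u,v}=(M_{[a,b]})_{w,v}$.

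The one point needing care is linearity. The difference operator must be applied to the outer call evaluated at two different inputs $\sigma+\alpha(\tau+b')$ and $\sigma+\alpha(\tau)$. The inductive hypothesis gives exactly that the outer increment is a fixed linear function of the input difference, which makes this legitimate. This is the same step as in \Cref{claim:cl11}, so I expect no new obstacle; it is simply the main place to check.

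\emph{Final algorithm and resources.} Run \Cref{nl:alg23} with $P^\ell_{r_s,r_t}$ replaced by \Cref{alg:BP-P}: toggle $U[\lambda_s]$ by $c\in\{0,1\}$, run, read $V[\lambda_t]$, uncompute, and output $\alpha_1-\alpha_0$. The claim above yields $(A^{(1)}\cdots A^{(\ell)})[s,t]\bmod q$.

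\begin{itemize}
\item Catalytic space is the $\lceil\log\ell\rceil+2$ blocks of $n/k$ registers, each of $\mcO(\log n)$ bits.
\item The recursion has depth $\lceil\log\ell\rceil$ and branching factor $3k$ (three calls per $m$). Workspace therefore stores per level the pair $(m,\text{step})$ in $\mcO(\log(3k))$ bits, together with $\mcO(\log n)$ bits for loop counters and the current segment (recomputable from the path). This gives $\mcO(\log n+\log(3k)\log\ell)$ workspace.
\item Time is $\poly(n)\cdot(3k)^{\log\ell}$.
\end{itemize}

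If the entries are integers rather than residues, repeat over polynomially many small primes and combine by Chinese remaindering in logspace via \cite{HAB}, as in the proof of \Cref{thm:apow}.
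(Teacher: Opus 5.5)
Your proposal is correct and follows essentially the same route as the paper. It uses cleanliness of the register programs as in \Cref{claim:matpowregunchanged}. It proves the difference identity by induction via the split $A^{(1..\ell)}=A^{(1..\ell_1)}A^{(\ell_1+1..\ell)}$, with the initial contents of $W_r$ cancelling and the sum over $m\in\Z_k$ covering all intermediate vertices. It then uses the toggle--run--read--uncompute wrapper of \Cref{nl:alg23} with the same resource accounting. Your explicit segment indices $(a,b)$ and induction on $b-a$ only make precise what the paper leaves implicit, namely that the inductive hypothesis is applied to an arbitrary contiguous block of layers such as $A^{(\ell_1+1..\ell)}$.
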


Here, for brevity of notation, we denote $A^{(i..j)}=A^{(i)}\ldots A^{(j)}$ for $i<j$. We remind from the previous discussion that we partition $[0, n-1]$ as $[0, n-1]=\sqcup_{i\in\bbZ_k}\{v\in [0, n-1]:v\equiv i\mod k\}$ and each index $v$ can be uniquely written as $\lambda_vk+r_v$ where $\lambda_v=\floor{v/k},r_v\in \bbZ_k$.

We again note that the programs $P^{\ell}_{i,j}$ are reversible and executing $P^{\ell}_{i,j}(U,V,W_1,\ldots,W_r)$ leaves all registers unchanged except those in block $V$, as mentioned in \ref{claim:matpowregunchanged} and their proofs are analogous. 

\begin{claim}[Branching program analogue of \cite{Edenhofer} Claim 9]
	Let $A^{(1)},\dots,A^{(\ell)}$ be a layered branching program.
	For all $\ell \ge 1$, $i,j \in \{0, \dots, k-1\}$, vectors $\tau,b \in \mathbb{Z}_q^{n/k}$, and $v \equiv j \pmod{k}$,
	\[
		\alpha^{(i,j)}_{\ell,v}(\tau+b)-\alpha^{(i,j)}_{\ell,v}(\tau)
		\equiv
		\sum_{\substack{u \in \Z_n\\ r_u=i}}
		b_{\lambda_u}\cdot (A^{(1)}A^{(2)}\cdots A^{(\ell)})_{u,v}
		\pmod{q}.
	\]
\end{claim}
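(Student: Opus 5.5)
The plan is to argue by strong induction on the length $\ell$, mirroring the proof of \Cref{claim:cl11}. The one change is that each recursive call now acts on a contiguous sub-sequence of layers, so the induction hypothesis must be stated for \emph{arbitrary} layer sequences. Concretely, I would prove the following. For every length $\ell'<\ell$ and every sequence of $\ell'$ matrices $B^{(1)},\dots,B^{(\ell')}$, the program $P^{\ell'}_{i,j}[B^{(1..\ell')}]$ satisfies the difference identity with $B^{(1)}\cdots B^{(\ell')}$ in place of $A^{(1)}\cdots A^{(\ell)}$. This generalization is needed because the second recursive call is fed $A^{(\ell_1+1..\ell)}$, which is not a prefix of the original sequence.

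\textbf{Base case $\ell=1$.} Line 4 of \Cref{alg:BP-P} gives directly
\[
\alpha^{(i,j)}_{1,v}(\tau)=\sum_{u:\, r_u=i} A^{(1)}[u,v]\,\tau_{\lambda_u}.
\]
This is linear in $\tau$, so the difference is $\sum_{u:\,r_u=i} b_{\lambda_u}A^{(1)}_{u,v}$, as required.

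\textbf{Inductive step.} Let $\ell_1=\lceil \ell/2\rceil$ and fix $m\in\Z_k$. Let $\sigma$ be the initial content of $W_r$.
\begin{itemize}
\item By the analogue of \Cref{claim:matpowregunchanged}, the first call adds the vector $\alpha^{(i,m)}_{\ell_1,\cdot}(\tau)$, computed with respect to $A^{(1..\ell_1)}$, to $W_r$ and changes nothing else.
\item The second call then adds $\alpha^{(m,j)}_{\lfloor\ell/2\rfloor,v}\big(\sigma+\alpha^{(i,m)}_{\ell_1,\cdot}(\tau)\big)$, computed with respect to $A^{(\ell_1+1..\ell)}$, to $V[\lambda_v]$.
\item The uncompute step restores $W_r$ to $\sigma$ without touching $V$.
\end{itemize}
One point must be checked: the value each call adds depends only on its input block and on the initial content of the scratch blocks $W_1,\dots,W_{r-1}$. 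Those scratch blocks are restored after every call, so the contributions for different $m$ simply add. Summing over $m$ gives the decomposition
\[
\alpha^{(i,j)}_{\ell,v}(\tau)\equiv\sum_{m}\alpha^{(m,j)}_{\lfloor\ell/2\rfloor,v}\big(\sigma+\alpha^{(i,m)}_{\ell_1,\cdot}(\tau)\big)\pmod q.
\]

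Now subtract the same expression at $\tau$ from the one at $\tau+b$. First apply the induction hypothesis to the outer call, with the second-half sequence and input vectors $\sigma+\alpha(\tau)$ and $\sigma+\alpha(\tau+b)$. This turns each summand into
\[
\sum_{u:\,r_u=m}\big(\alpha^{(i,m)}_{\ell_1,u}(\tau+b)-\alpha^{(i,m)}_{\ell_1,u}(\tau)\big)\,\big(A^{(\ell_1+1..\ell)}\big)_{u,v}.
\]
Then apply the hypothesis to the inner call with the first-half sequence, replacing the bracket by $\sum_{w:\,r_w=i} b_{\lambda_w}(A^{(1..\ell_1)})_{w,u}$. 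The residue classes $m\in\Z_k$ partition $\Z_n$, so the double sum over $m$ and $u$ becomes a full sum over $u\in\Z_n$. That sum is exactly the entry $(A^{(1..\ell_1)}A^{(\ell_1+1..\ell)})_{w,v}=(A^{(1)}\cdots A^{(\ell)})_{w,v}$, by associativity of matrix products. This completes the induction.

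The main obstacle is only bookkeeping: the decomposition identity must be justified carefully, including the independence from the scratch blocks and the correct index ranges for the two halves of the layer sequence. Once the hypothesis is stated for arbitrary layer sequences, the algebra is identical to that of \Cref{claim:cl11}, with $A^{\lceil\ell/2\rceil}$ and $A^{\lfloor\ell/2\rfloor}$ replaced by the two partial products.
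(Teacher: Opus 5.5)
Your proposal is correct and follows essentially the same route as the paper: induction on $\ell$, the per-$m$ decomposition $\alpha^{(i,j)}_{\ell,v}(\tau)\equiv\sum_m \alpha^{(m,j)}_{\lfloor \ell/2\rfloor,v}(\sigma+\alpha^{(i,m)}_{\lceil\ell/2\rceil,\cdot}(\tau))$, two applications of the induction hypothesis, and recombining the residue classes into the product $A^{(1..\ell_1)}A^{(\ell_1+1..\ell)}$. Two of your steps make precise what the paper's proof uses implicitly: you strengthen the hypothesis to arbitrary layer sequences, which is needed because the second call receives the non-prefix block $A^{(\ell_1+1..\ell)}$, and you note that the scratch blocks are restored between iterations.
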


\begin{proof}
	We proceed by induction on $\ell$.
	
	\textbf{Base case $\ell=1$.}
	Immediate from linearity of the transition step:
	\[
		\alpha^{(i,j)}_{1,v}(\tau)
		\equiv
		\sum_{\substack{u \in \Z_n:\\ r_u=i}}
		A^{(1)}[u,v]\cdot \tau_{\lambda_u}
		\pmod{q}.
	\]
	
	\textbf{Inductive step.}
	Let $\ell_1=\lceil \ell/2\rceil$, $\ell_2=\lfloor \ell/2\rfloor$.
	By definition of layered branching programs, any length-$\ell$ path decomposes uniquely at layer $\ell_1$, giving
	$
		A^{(1..\ell)} = A^{(1..\ell_1)} \cdot A^{(\ell_1+1..\ell)}.
	$
	
	Thus the recursive computation satisfies,
	\[
		\alpha^{(i,j)}_{\ell,v}(\tau)
		\equiv
		\sum_{m \in \Z_k}
		\alpha^{(m,j)}_{\ell_2,v}
		\big(\sigma + \alpha^{(i,m)}_{\ell_1,\cdot}(\tau)\big)
		\pmod{q},
	\]
	where $\alpha^{(i,m)}_{\lceil \ell/2 \rceil,\cdot}(\tau)$ denotes a vector in $\mathbb{Z}_q^{n/k}$ indexed by the block variable $\lambda_u \in [n/k]$.Therefore,
	\begin{align*}
		\alpha^{(i,j)}_{\ell,v}(\tau+b) - \alpha^{(i,j)}_{\ell,v}(\tau)
		 & \equiv
		\sum_{m \in \Z_k}
		\Big(
		\alpha^{(m,j)}_{\ell_2,v}
		\big(\sigma + \alpha^{(i,m)}_{\ell_1,\cdot}(\tau + b)\big)
		-
		\alpha^{(m,j)}_{\ell_2,v}
		\big(\sigma + \alpha^{(i,m)}_{\ell_1,\cdot}(\tau)\big)
		\Big)
		\pmod{q}.
	\end{align*}
	
	Applying the induction hypothesis successively twice to the inner term, we get
	\[
		\equiv
		\sum_{m \in \Z_k}
		\sum_{\substack{u\in\Z_n:\\ r_u=m}}
		\sum_{\substack{w\in\Z_n:\\ r_w=i}}
		b_{\lambda_w}
		(A^{(1..\ell_1)})_{w,u}
		(A^{(\ell_1+1..\ell)})_{u,v} \pmod{q}.
	\]
	
	Reordering sums,
	\[
		\equiv
		\sum_{\substack{w\in\Z_n:\\ r_w=i}}
		b_{\lambda_w}
		\sum_{u\in\Z_n}
		(A^{(1..\ell_1)})_{w,u}(A^{(\ell_1+1..\ell)})_{u,v} \pmod{q}.
	\]
	
	By matrix multiplication,
	\[
		\equiv
		\sum_{\substack{w\in\Z_n:\\ r_w=i}}
		b_{\lambda_w}
		(A^{(1..\ell)})_{w,v} \pmod{q}.
	\]

	Hence,
	\[
		\alpha^{(i,j)}_{\ell,v}(\tau+b)-\alpha^{(i,j)}_{\ell,v}(\tau)
		\equiv
		\sum_{\substack{w \in \Z_n:\\ r_w=i}}
		b_{\lambda_w}\cdot (A^{(1)}A^{(2)}\cdots A^{(\ell)})_{w,v} \pmod{q}.
	\]
\end{proof}
With the above proved, the proof of \Cref{thm:abp} follows in the same way as the proof of \Cref{thm:apow}.

\section{Sublinear Catalytic Space for Distances}\label{sec:subd}
In this section we prove that distance in polynomially weighted graphs can be computed in sublinear $\CLP$. The main result is stated in the following.

\begin{theorem}\label{thm:dist}
	Given an $n$ vertex directed graph $G=(V,E)$ with polynomially bounded edge weights $W:E\to \bbZ$, and any two designated vertices
	$s,t\in V$, we can compute the minimum weight of an $s$-$t$ walk in sublinear $\CLP$,
	assuming there is no negative cycle in the graph.
\end{theorem}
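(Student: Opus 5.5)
We first reduce weighted distance to a walk-polynomial computation. Since there are no negative cycles, the minimum weight of an $s$-$t$ walk equals the minimum weight of a path with at most $n-1$ edges. Shifting every edge weight by the same amount would change path weights by an amount that depends on the number of edges, so we avoid that. Instead we use the walk polynomial: let $A(x)$ be the matrix with entries $x^{W(e)}$, taken over Laurent monomials. To keep exponents nonnegative, we let each step contribute $x^{W(e)+M}$, where $M$ bounds $|W|$, and we also add a self-loop $x^{M}$ at $t$ (weight $0$ after the shift). With this padding, a walk of exactly $L=n-1$ steps encodes all paths of length at most $L$. A walk of weight $w$ then contributes exponent $w+LM$, so the distance is the least exponent with a nonzero coefficient in $A(x)^{L}[s,t]$, minus $LM$. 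The exponents are polynomially bounded, so we evaluate at polynomially many points modulo several primes and interpolate (\cite{HAB}). Coefficients are nonnegative integers, so no cancellation occurs, provided we recover the coefficients over $\bbZ$ by Chinese remaindering. To save space we would instead keep, for each entry, only a "truncated" object: the minimum exponent together with its count. This is the approximation to the walk polynomial, and combining two such objects is a $(\min,+)$ operation with counts.

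The direct application of \Cref{thm:apow} with $\ell=n$ costs near linear catalytic space, which is too much. Following \cite{CDKMR}, we therefore sparsify. Choose $h=2^{\Theta(\sqrt{\log n})}$ and a set $S$ of "hub" vertices of size about $n/h$ (times polylog), containing $s$ and $t$, such that every shortest path can be cut into segments of at most $h$ edges with endpoints in $S$. The graph $H$ on $S$ then gets edge weights $d_{\le h}(a,b)$, the minimum weight of an $a$-$b$ walk with at most $h$ edges. The $s$-$t$ distance in $G$ equals the $s$-$t$ distance in $H$, where each edge of $H$ is computed on demand.

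Both computations fit the sublinear budget. An edge weight of $H$ is computed by \Cref{thm:apow} with $\ell=h$ and $k=n/2^{\Theta(\sqrt{\log n})}$. This uses catalytic space $\mcO(2^{\Theta(\sqrt{\log n})}\log^2 n)$ and time $\poly(n,(3k)^{\log h})$, which is $n^{\mcO(\sqrt{\log n})}$ unless we take $k$ constant. So we take $k=\mcO(1)$, which gives catalytic space $\mcO(n\log n\log h)$ on $G$. This is too large unless the short-walk computations are themselves done recursively in the \cite{CDKMR} style, where edges of the compressed graph are oracle calls. Concretely, we run the near-linear-space distance algorithm (\Cref{thm:apow} with $k=1$, $\ell=|S|$) on $H$. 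That costs $\mcO(|S|\log^2 n)=n/2^{\Theta(\sqrt{\log n})}$ catalytic space. Each entry access of $H$'s matrix is replaced by a subroutine computing $d_{\le h}(a,b)$. That subroutine uses the catalytic walk polynomial restricted to short walks, implemented as \cite{CDKMR} do for reachability, with a separate catalytic region of sublinear size and workspace $\mcO(\log n)$. Catalytic restoration composes because every subroutine is reversible (\Cref{claim:matpowregunchanged}), and the outer program only needs the computed value added to and later subtracted from registers.

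The main obstacle is constructing the hub set $S$ and the short-segment decomposition deterministically within sublinear catalytic space and polynomial time, in a way that preserves weighted distances rather than mere reachability. For this we would adapt the \cite{CDKMR} compression: they select hubs so that reachability witnesses have short hops. We would reuse their construction on the hop structure of the layered graph, using "at most $h$ edges" (via the $t$-style self-loop padding at every hub) in place of reachability. Every shortest path then has a hub within each window of $h$ consecutive vertices, and we check that the $(\min,+)$ truncation commutes with their stirring-based argument.
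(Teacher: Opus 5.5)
\textbf{There is a genuine gap.} Your architecture matches the paper's: a hub set of size $\tilde{\mcO}(n/h)$, a compressed graph weighted by $d_{\le h}$, and \Cref{thm:apow} with $k=1$ on the compressed graph, with entries computed on demand. But you never supply either ingredient that makes it work.

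\textbf{Short-walk weights.} Your parameter analysis rules out the choice that works. You consider only $k=n/2^{\Theta(\sqrt{\log n})}$ and $k=\mcO(1)$, and conclude that \Cref{thm:apow} cannot give polynomial time with sublinear catalytic space. That dichotomy is false. Take $h=2^{c\sqrt{\log n}}$ and $k=2^{c'\sqrt{\log n}}$. Then:
\begin{itemize}
\item time is $(3k)^{\log h}=n^{cc'+o(1)}$;
\item workspace is $\mcO(\log h\log(3k)+\log n)=\mcO(\log n)$;
\item catalytic space is $\mcO((n/k)\log n\log h)=n/2^{\Theta(\sqrt{\log n})}$.
\end{itemize}
This is exactly \Cref{short}. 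It makes your ``recursive, \cite{CDKMR}-style'' short-walk subroutine unnecessary; as written, you invoke that subroutine but never specify it.

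\textbf{Hub set.} You call this the main obstacle and leave it open. The paper handles it with a graph-oblivious hitting set (\Cref{drep}). From a seed $\sigma$ of $\mcO(\log n)$ bits, a logspace computation (images of pairwise-independent hash functions along an expander walk) yields a multiset $U_\sigma$ of size $\mcO(n\log n/\gamma)$. This set hits any fixed family of at most $n^2$ length-$\gamma$ paths with probability $>3/4$. You do not need one $S$ that works for every shortest path, nor any weight-aware deterministic construction: enumerate all seeds and output $\min_\sigma\mu_\sigma$.

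The missing idea that makes this preserve \emph{weighted} distances is a one-sided correctness argument:
\begin{itemize}
\item For every seed $\sigma$, each compressed edge is realized by a genuine walk in $G$ of exactly that weight. With no negative cycles, this gives $\mu_\sigma\ge\opn{dist}(s,t)$.
\item Take a seed that hits every $\gamma$-segment of one fixed minimum-weight simple path, and set $\lambda=2\gamma$. Then consecutive hits are at most $\lambda$ apart, so they are joined by compressed edges. This yields a compressed path of weight at most $\opn{dist}(s,t)$.
\end{itemize}
Hence the minimum over seeds is exact.

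\textbf{Truncation.} Drop the ``(minimum exponent, count)'' truncation and the plan to check that it commutes with stirring. The register programs uncompute by subtraction, so they need a ring such as $\Z_q$. The $(\min,+)$ structure with counts has no additive inverses, so it cannot be pushed through those programs. Instead, keep the full walk polynomial evaluated at points modulo primes. Extract the least exponent only at the end, by interpolation and Chinese remaindering, as in your first paragraph.
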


From \Cref{thm:apow}, we know that we can compute $A^\ell$ where $A$ is an
$n\times n$ matrix and $\ell\le n$, in
$\C\TISP\Big[\poly(n,k^{\log \ell}),
		\mcO(\log \ell\cdot \log k+\log n),
		\mcO\Big(\dfrac n k \cdot \log^2 n\Big)\Big].$

Therefore, we can compute $A^{2^{\sqrt{\log n}}}$ in
$\C\TISP\Big[\poly(n),\mcO(\log n),
		\mcO\Big(\dfrac{n}{2^{\Theta(\sqrt {\log n})}} \Big)\Big]$
(by setting $k=2^{\mcO(\sqrt{\log n})}$). Thus we have the following lemma for short length walks:

\begin{lemma}\label{short}
	Given a directed graph $G$ on $n$ vertices (without self-loops) 
	with polynomially bounded edge weights, we can compute the minimum weight of
	a walk from $s$ to $t$ that uses at most $2^{c\sqrt{\log n}}$ edges in sublinear $\CLP$
	for any constant $c$.
\end{lemma}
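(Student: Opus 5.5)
We will encode weights as exponents of a formal variable and read off the minimum weight from the walk polynomial computed by \Cref{thm:apow}. Let $L=2^{c\sqrt{\log n}}$ and let the weights lie in $[-M,M]$ with $M=\poly(n)$. First shift all weights to be nonnegative: set $w'(e)=W(e)+M\in[0,2M]$. A walk with exactly $j$ edges has weight $w'=W+jM$. So we handle each exact length $j\le L$ separately and take the minimum over $j$ of (min $w'$-weight at length $j$) $-\,jM$. Define the matrix $A$ over $\Z[x]$ by $A[u,v]=x^{w'(u,v)}$ for edges and $0$ otherwise. Then $A^j[s,t]=\sum_{d} N_{j,d}\,x^d$, where $N_{j,d}$ is the number of $s$-$t$ walks with $j$ edges and $w'$-weight $d$. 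The desired quantity is the smallest $d$ with $N_{j,d}\neq 0$. The degree is at most $2ML=\poly(n)$, and $N_{j,d}\le n^{L}$, so each coefficient has $\mcO(L\log n)$ bits.

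To stay within \Cref{thm:apow}, which works over $\Z_q$, we evaluate rather than compute symbolically. For a prime $q$ and a point $x_0\in\Z_q$, the scalar matrix $A(x_0)$ has entries computable in logspace. \Cref{thm:apow} with $k=2^{\Theta(\sqrt{\log n})}$ chosen so that $(3k)^{\log L}=\poly(n)$ then gives $A(x_0)^j[s,t]\bmod q$ in polynomial time, $\mcO(\log n)$ work space and $\mcO(\frac{n}{k}\log^2 n)=\frac{n}{2^{\Theta(\sqrt{\log n})}}$ catalytic space. The catalytic tape is reused across all calls, since each call restores it. We then recover the coefficients $N_{j,d}\bmod q$ by Lagrange interpolation over $2ML+1$ points (taking $q>2ML$). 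Each coefficient is a sum of $\poly(n)$ terms, each a product of a logspace-computable interpolation weight and an oracle value, so it can be accumulated in $\mcO(\log n)$ work space by recomputing oracle values on demand.

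Deciding whether $N_{j,d}\neq 0$ needs care, because a coefficient may vanish modulo a single prime. Since $0\le N_{j,d}<n^{L}$, it is nonzero iff it is nonzero modulo some prime among the first $\mcO(L\log n)$ primes, whose product exceeds $n^L$. There are only $2^{\mcO(\sqrt{\log n})}$ such primes, each of $\mcO(\log n)$ bits. So we can loop over them, reusing the catalytic tape, and avoid full Chinese remaindering altogether. Equivalently, we could use the \cite{HAB} CRT route as in the proof of \Cref{thm:apow}.

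The final algorithm iterates over $j\le L$, then $d$ in increasing order, then primes $q$, and stops at the first nonzero $N_{j,d}$. It keeps a running minimum of $d-jM$, which uses logspace counters. Allowing $j=0$ or not is immaterial for $s\neq t$. If no walk exists, it outputs $\infty$. The total time is a product of polynomially many polynomial-time calls.

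The main obstacle is the budget bookkeeping in \Cref{thm:apow}. We need $\log L\cdot\log(3k)=\mcO(\log n)$, so that both the time $(3k)^{\log L}$ is polynomial and the workspace is logarithmic, while also $n/k$ is sublinear. With $\log L=c\sqrt{\log n}$ this forces $\log k=\Theta(\sqrt{\log n}/c)$, which is fine for any fixed constant $c$. The catalytic space is then $n/2^{\Theta(\sqrt{\log n}/c)}$, which is sublinear in the stated sense.
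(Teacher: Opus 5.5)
Your proposal is correct and follows the paper's proof. You put the weights into exponents of $x$, compute the matrix-power entry via \Cref{thm:apow} with $k=2^{\Theta(\sqrt{\log n})}$, and read off the least degree by interpolation; the differences are cosmetic (exact lengths $j$ with a $jM$ shift instead of unit self-loops and a global factor $x^{W_{\max}}$, and a nonvanishing test over several primes instead of full Chinese remaindering via \cite{HAB}). One small fix: interpolation over $2ML+1$ points needs every prime you loop over to exceed $2ML$, so take $\mcO(L\log n)$ primes from $(2ML,n^{\mcO(1)}]$ rather than the first $\mcO(L\log n)$ primes; their product still exceeds $n^L$.
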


\begin{proof}
	Let $A$ be the weighted adjacency matrix over $\mathbb Z[x]$ given by
	\[
		A[u,v]=
		\begin{cases}
			x^{w(u,v)} & (u,v)\in E,       \\[2mm]
			1          & u=v               \\[2mm]
			0          & \text{otherwise.}
		\end{cases}
	\]
	
	We can compute $A^\ell$ in\footnote{Notice that the exponents in the
		entries of $A$ may be negative but are upper bounded by some $W_{\max}$ in 
		magnitude.  Thus if we multiply each entry of $A$ by 
		$x^{W_{\max}}$ and interpolate to get the polynomial corresponding to an entry
		-- then subtracting $\ell W_{\max}$ from each exponent will reveal the original 
		Laurent polynomial. Henceforth we will not
		allude to this artifice for dealing with Laurent polynomials.}
	sublinear $\CLP$
	for all $\ell\le 2^{c\sqrt{\log n}}$.
	Observe that every monomial appearing in
	$A^\ell[s,t]$ corresponds to an $s$-$t$ walk using at most $\ell$
	edges, and the degree of the monomial equals the total weight of the
	walk. Therefore, the least degree appearing in $A^\ell[s,t]$ equals the
	minimum weight of an $s$-$t$ walk using at most $\ell$ edges. This least degree monomial can be interpolated in logspace through \cite{HAB}.
	Setting $\ell=2^{c\sqrt{\log n}}$ proves the claim.
\end{proof}

Let $G=(V,E)$ be a digraph, and $U\subset V$ be a multiset. Then for a
path $P$ of $G$, we say that $U$ hits $P$, if
$V(P)\cap U\ne \phi$. We shall use the following sparsification lemma, which is very similar to \cite[Theorem 4.1]{CDKMR}.

\begin{lemma}\label{drep}
	There is an algorithm $\mcU$ such that for any directed graph
	$G=(V,E)$ on $n$ vertices, a value $\gamma=\Omega(\log n)$, and seed
	$\sigma\in\{0,1\}^{(\eps+6)\log n}$ for $\eps>0$, satisfies:
	
	\begin{enumerate}
		\item $\mcU(G,\gamma,\sigma,\eps)$ returns a multiset
		      $U_\sigma\subset V$ of size
		      $\mcO\Big(\frac {n\log n}\gamma\Big)$, in logspace
		      
		\item For any set of at most $n^2$ paths each of length $\gamma$, namely
		      $\mcP=(P_1,\ldots,P_{n^2})$ of $G$, we have
		      \[
			      \Pr_{\sigma\in\{0,1\}^{(\eps+6)\log n}}
			      [\forall i\le n^2: U_\sigma\text{ hits }P_i]>3/4.
		      \]
	\end{enumerate}
\end{lemma}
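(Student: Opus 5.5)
We will build $\mcU$ by derandomizing the standard random hitting set with a pairwise (or small-wise) independent generator. Set $p=\min\{1, C\log n/\gamma\}$ for a suitable constant $C$, and draw $m=\Theta(\log n)$ independent ``rounds'' cheaply as follows. The seed $\sigma$ of $(\eps+6)\log n$ bits is interpreted as the description of a hash function from a pairwise independent family $h:V\times[T]\to[R]$. For instance, $h(x)=ax+b$ over a field of size $n^{O(1)}$, which needs $O(\log n)$ bits, and the budget $(6+\eps)\log n$ comfortably covers two field elements of size about $n^{3+\eps/2}$.

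We put $v$ into $U_\sigma$ (with multiplicity) for each index $j\in[T]$ with $h(v,j)<pR$. Computing $U_\sigma$ is plainly logspace: iterate over $v,j$ and evaluate an affine map. The expected size is $nTp$. We would instead fix the size deterministically by taking $U_\sigma=\{h(j): j\in[N]\}$ with $N=\Theta(n\log n/\gamma)$, where $h:[N]\to V$ is sampled from a pairwise independent family. This is a multiset of exactly the stated size $\mcO(n\log n/\gamma)$, which settles item~1 immediately.

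For item~2, fix one path $P$ with $\gamma$ vertices (a path, so the vertices are distinct). Let $X=\sum_{j\in[N]}\mathbf 1[h(j)\in V(P)]$. Each indicator has mean $\mu_0\approx\gamma/n$, so $\mathbb E X=N\gamma/n=\Theta(\log n)$. Pairwise independence gives $\mathrm{Var}X\le \mathbb EX$, so Chebyshev yields $\Pr[X=0]\le 1/\mathbb EX=O(1/\log n)$.

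That is not small enough to union bound over $n^2$ paths, and this is the main obstacle: pairwise independence only gives inverse-logarithmic failure. The fix I would try first is to split $U_\sigma$ into $L=\Theta(\log n)$ blocks $\{h_\ell(j)\}_{j\le N/L}$. Each block has expected hits $\Theta(\gamma N/(nL))$, which we arrange to be a large constant, so each block misses $P$ with probability at most $1/4$. The blocks' hash functions are then generated from a single seed by walking on a constant-degree expander, using a Gillman/AKS-type hitting bound: the probability that all $L$ blocks miss is $2^{-\Omega(L)}\le n^{-3}$. A union bound over the $n^2$ paths then gives failure below $1/n<1/4$. The seed length is $O(\log n)$ for the initial vertex plus $O(L)=O(\log n)$ for the walk, and tuning constants keeps it within $(\eps+6)\log n$.

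If the expander route's constants are uncomfortable, the fallback is to follow \cite[Theorem 4.1]{CDKMR} directly. That argument uses $\Theta(\log n)$-wise independence realised by a random low-degree polynomial over a field, with a tail bound for $t$-wise independent sums. It costs $O(\log^2 n)$ seed bits, though, so matching the stated $(\eps+6)\log n$ is the delicate point. Either way, $\gamma=\Omega(\log n)$ is used exactly to make the per-block expected hit count a constant while keeping $N=\mcO(n\log n/\gamma)$, and logspace computability of $U_\sigma$ follows because each element is obtained by evaluating a hash or an expander neighbour map.
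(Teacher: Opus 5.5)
Your main route is the paper's own proof. Pairwise independent hash functions $[m]\to V$ with $m=\Theta(n/\gamma)$ each hit a fixed path with constant probability by Chebyshev. $K=\Theta(\log n)$ of them are read off a walk on an expander whose vertices are the hash functions, and $U_\sigma$ is the union of their images, of size $Km=\mcO(n\log n/\gamma)$. A union bound over the $n^2$ paths finishes.

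\textbf{The gap.} It lies in the one quantitative part of the statement: a seed of $(\eps+6)\log n$ bits for every $\eps>0$. You assert this follows by ``tuning constants'', but the parameters you actually fix do not give it. The starting hash function already costs about $2\log n$ bits, so the walk may use only about $(4+\eps)\log n$ bits. In the hitting bound $\beta(\beta+\alpha(1-\beta))^{K-1}$, each step costs $\log\Delta$ bits and shrinks the failure probability by a factor of about $1/(\beta+\alpha(1-\beta))\le 1/\alpha$. Since $\alpha$ cannot be much below $2/\sqrt{\Delta}$, each halving of the failure probability costs more than $2$ bits. Consequences for your choices:
\begin{itemize}
\item Your target of $n^{-3}$ per path needs at least $6\log n$ walk bits, so at least $8\log n$ in total.
\item With the per-block miss probability fixed at $1/4$, the cost is about $5$ bits per halving even for the best Ramanujan degree. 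That gives roughly $12\log n$ seed bits even for the minimal target $n^{-(2+\delta)}$.
\end{itemize}

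\textbf{The fix.} The missing ingredient is the paper's coupling of parameters:
\begin{itemize}
\item Let the per-block miss probability $\beta$ be small, depending on $\eps$, so that $m=n/(\beta\gamma)$.
\item Use a Ramanujan expander of degree $\Delta=1+\beta^{-2}$. Then $\alpha\le 2\sqrt{\Delta-1}/\Delta\le 2\beta$, and the all-miss probability is at most $(3\beta)^K$.
\item Aim only for $n^{-(2+\eps/8)}$ per path, which still survives the union bound over $n^2$ paths.
\end{itemize}
With $K=(2+\eps/8)\log n/\log(1/(3\beta))$, the walk costs $K\log\Delta$ bits. Since $\log(1+\beta^{-2})/\log(1/(3\beta))\to 2$ as $\beta\to 0$, the total is $2\log n+(4+\eps/4+o_\beta(1))\log n\le(\eps+6)\log n$ for small enough $\beta$. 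Meanwhile $|U_\sigma|\le Km=\mcO_\eps(n\log n/\gamma)$.

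\textbf{The fallback.} Drop it. The argument of \cite[Theorem 4.1]{CDKMR} that the paper follows is this same expander walk over a pairwise independent family, not $\Theta(\log n)$-wise independence. An $O(\log^2 n)$-bit seed contradicts the statement outright. It would also make the enumeration over all seeds in Algorithm~\ref{nl2:alg1} superpolynomial.
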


We push the proof of the above to \Cref{ap1}, it is essentially the same as that of \cite[Theorem 4.1]{CDKMR}. We now prove our main result.

\begin{algorithm}
	\caption{$\opn{DIST}_n$}\label{nl2:alg1}
	\begin{algorithmic}[1]
		\State \textbf{Input:} $G=(V,E),|V|=n,s,t\in V$.
		\State Fix $\eps>6$, $\lambda\gets2^{100\sqrt{\log n}}$.
		\State $\mu\gets\infty$
		\For {$\sigma\in\{0,1\}^{(\eps+6)\log n}$}
		\State $U_\sigma\gets\mcU(G,\lambda/2,\sigma,\eps)$
		\State $V_\sigma\gets \{s,t\}\cup U_\sigma$
		\State $E_\sigma$ is defined by the following loop:
		\For {$u,v\in V_\sigma$}
		\State $(u,v)\in E_\sigma$ if there is a path from $u$ to $v$
		using at most $\lambda$ edges
		\State $W_\sigma(u,v)\gets$ min weight of a path from
		$u$ to $v$ using $\le\lambda$ edges
		\Comment{\Cref{short}}
		\EndFor
		\State $G_\sigma\gets(V_\sigma,E_\sigma)$
		\State $A_\sigma\gets$ weighted adjacency matrix of $G_\sigma$
		with self-loops.
		\State $\mu_\sigma\gets$ degree of the minimum-degree term in
		$A_{\sigma}^{|V_\sigma|}[s,t]$
		\Comment{$\mu_\sigma=\infty$ if no such term exists.}
		\State $\mu\gets\min(\mu,\mu_\sigma)$
		\EndFor
		\State \Return $\mu$
	\end{algorithmic}
\end{algorithm}

\begin{proof}[Proof of \Cref{thm:dist}]
	We describe the algorithm in \Cref{nl2:alg1}, and discuss the proof of
	correctness here.
	
	Fix $\lambda=2^{c\sqrt{\log n}}$ for some constant $c$ to be chosen
	later. Let $\eps>6$ be fixed, and let
	$U_\sigma=\mcU(G,\gamma=\lambda/2,\sigma,\eps),$
	where $\sigma\in\{0,1\}^{(\eps+6)\log n}$ is as in
	\Cref{drep}.
	
	Define
	$
		V_\sigma=\{s,t\}\cup U_\sigma,
	$
	and let $G_\sigma$ be a graph on vertex set $V_\sigma$ where
	$
		(u,v)\in E_\sigma
	$
	iff there is a walk from $u$ to $v$ in $G$ containing at most
	$\lambda$ edges.
	
	Notice that
	$|V_\sigma|
		=
		\mcO\Big(\frac{n\log n}{\gamma}\Big).$
	
	Let
	$
		\rho_{st}=(s=u_0,u_1,\ldots,u_\ell=t)
	$
	be a minimum-weight simple path from $s$ to $t$.
	Such a path exists whenever
	$t$ is reachable from $s$, and satisfies $\ell\le n-1$.
	
	Divide $\rho_{st}$ into consecutive segments of $\gamma$ edges
	(except the last which is at most $\gamma$).
	Let
	$
		\ell'
		=
		\Big\lfloor\frac{\ell}{\gamma}\Big\rfloor.
	$
	For $i\in\{0,1,\ldots,\ell'-1\}$, let $\rho^i$ denote the segment between vertices
	$w_i = u_{\gamma\dot i}$ and $w_{i+1} = u_{\gamma(i+1)}$
	consisting of exactly $\gamma$ edges. Let $\rho^{\ell'}$ consist of the edges 
	after $w_{\ell'}$ i.e. $u_{\ell - {\ell}\bmod{\gamma}},\ldots,u_\ell = t$ -- 
	which consists of less than $\gamma$ edges.
	
	Let $\mcP=\{\rho^i\}_{0\leq i< \ell'}$.
	By \Cref{drep}, with probability at least $3/4$, every segment
	$\rho^i$ contains a vertex $v_i\in U_\sigma$ (for $i \in {0,..,\ell'-1}$).
	
	For consecutive vertices $v_i,v_{i+1}$, the subpath of $\rho_{st}$
	between them uses at most $2\gamma=\lambda$ edges. Hence
	$
		(v_i,v_{i+1})\in E_\sigma.
	$
	Notice that the number of edges between $s,v_0$ is at most $\gamma$
	and the number of edges between $v_{\ell'-1},t$ is at most $2\gamma$. 
	All other pairs $v_i,v_{i+1}$ are separated by at most $2\gamma$ edges.
	$
		(s,v_0,\ldots,v_{\ell'-1},t)
	$
	is a path in $G_\sigma$.
	Assign weights to edges of $G_\sigma$ by
	$
		W_\sigma:E_\sigma\to\mathbb \mathbb{Z},
	$
	where $W_\sigma(u,v)$ is the minimum weight of a path from $u$ to $v$
	using at most $\lambda$ edges.
	
	The subpath of $\rho_{st}$ between any two consecutive vertices among
	$
		s,v_0,\ldots,v_{\ell'-1},t
	$
	uses at most $\lambda$ edges. Hence the corresponding edge of
	$G_\sigma$ has weight at most the weight of that subpath.
	Summing over all segments shows that the minimum-weight path from
	$s$ to $t$ in $G_\sigma$ has weight at most the minimum-weight path
	weight in $G$.
	
	Conversely, every edge $(u,v)\in E_\sigma$ corresponds to a path in
	$G$ from $u$ to $v$ of weight exactly $W_\sigma(u,v)$. Therefore any path in
	$G_\sigma$ can be expanded into a walk in $G$ of the same total weight
	from which deleting cycles (necessarily non-negative) yields a path of weight
	at most the weight of the original compressed path.
	
	It follows that the minimum-weight path from $s$ to $t$ in $G_\sigma$
	has weight at least the minimum-weight path weight in $G$.
	Hence the minimum-weight path from $s$ to $t$ in $G_\sigma$ has exactly
	the same weight as the minimum-weight path from $s$ to $t$ in $G$.
	
	Let $A_\sigma$ be the weighted adjacency matrix of $G_\sigma$ given by
	\[
		A_\sigma[u,v]
		=
		\begin{cases}
			x^{W_\sigma(u,v)} & u \neq v \text{and} (u,v)\in E_\sigma, \\[2mm]
			1                 & u=v,                                   \\[2mm]
			0                 & \text{otherwise.}
		\end{cases}
	\]
	
	Using \Cref{short}, we can compute $A_\sigma$ in sublinear $\CLP$.
	
	We now compute
	$
		A_\sigma^{|V_\sigma|}[s,t].
	$ Let this polynomial be $p_\sigma(x)$.
	The least degree appearing in $p_\sigma(x)$ equals the minimum weight
	of an $s$-$t$ path in $G_\sigma$.
	Suppose this degree is $\mu_\sigma$. If
	$p_\sigma\equiv0$, we set $\mu_\sigma=\infty$.
	We enumerate all seeds
	$
		\sigma\in\{0,1\}^{(\eps+6)\log n}
	$
	and return
	$
		\mu=\min_\sigma \mu_\sigma.
	$
	For at least one good seed, the previous argument shows that
	$\mu_\sigma$ equals the minimum weight of an $s$-$t$ path in $G$.
	Therefore $\mu$ is exactly the minimum weight of an $s$-$t$ path in
	$G$.
	The catalytic space used is
	\[
		\mcO(|U_\sigma|\poly(\log n))
		=
		\dfrac n {2^{\Theta(\sqrt{\log n})}}
	\]
	for a sufficiently large choice of $c$.
	The workspace is clearly $\mcO(\log n)$, and the running time is
	polynomial.
\end{proof}

\begin{corollary}\label{cor:negCycle}(Negative cycle detection)
	Given an $n$-vertex directed graph $G=(V,E)$ with polynomially bounded integer edge weights, we can decide whether $G$ contains a negative-weight cycle in sublinear $\CLP$.
\end{corollary}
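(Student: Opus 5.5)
We reduce negative cycle detection to a comparison of shortest walk weights between two lengths, which is what \Cref{thm:dist} and \Cref{short} let us compute. Standard fact: $G$ has no negative cycle iff, for every vertex $v$, the minimum weight of a closed $v$-$v$ walk with at most $n$ edges is at least $0$. Conversely, if a negative cycle exists, some simple negative cycle $C$ has at most $n$ edges. Any vertex $v$ on $C$ then has a closed walk of at most $n$ edges and negative weight. So it suffices to decide, for each $v$ (enumerated with an $\mcO(\log n)$ counter in workspace), whether some closed walk through $v$ of length between $1$ and $n$ has negative weight. The test only needs a signal when a negative cycle is present, not a correct distance.

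Since \Cref{thm:dist} assumes no negative cycle, we cannot call it directly. Instead we rerun the algorithm $\opn{DIST}_n$ on a modified graph and examine its outcome. For a fixed $v$, split $v$ into a source copy $v_{\mathrm{out}}$, carrying $v$'s out-edges, and a sink copy $v_{\mathrm{in}}$, carrying $v$'s in-edges. Then $v_{\mathrm{out}}$-$v_{\mathrm{in}}$ walks correspond exactly to closed walks at $v$.

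The key step is to reexamine the proof of \Cref{thm:dist} without the no-negative-cycle assumption. Every edge of $G_\sigma$ of weight $W_\sigma(u,v)$ still corresponds to a genuine walk in $G$ of that weight. Hence every monomial of $A_\sigma^{|V_\sigma|}[s,t]$ is the weight of a genuine $s$-$t$ walk in $G$, so $\mu$ is always at least some true walk weight. Completeness uses only the hitting property of \Cref{drep}. Apply it to the paths formed by the segments of the negative simple cycle $C$, cut open at $v$ into a simple path from $v_{\mathrm{out}}$ to $v_{\mathrm{in}}$ of at most $n$ edges. For a good seed, the compressed walk in $G_\sigma$ then has weight at most $w(C)<0$. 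This works because each compressed edge takes a minimum over short walks, which is at most the weight of the corresponding segment.

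Therefore the modified procedure returns a negative value iff some closed walk through $v$ has negative weight. Such a closed walk forces a negative cycle, since it decomposes into cycles, one of which must be negative. If no negative cycle exists, every returned value is at least $0$. The space and time bounds are inherited from \Cref{thm:dist}, with an extra $\mcO(\log n)$ loop over $v$. The catalytic tape is reused across iterations since each iteration restores it.

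The main obstacle is the interpolation step when negative cycles exist. The weights of walks of bounded length remain polynomially bounded, because the walk length is at most $\lambda |V_\sigma| \le \poly(n)$. So the Laurent polynomial entries of $A^\ell$ and $A_\sigma^{|V_\sigma|}$ still have polynomially bounded degree ranges, and the shift-by-$x^{W_{\max}}$ artifice plus \cite{HAB} interpolation still applies. I would verify this degree bound carefully. I would also check that the self-loop entries $1$ of $A_\sigma$ do not cause issues, since they only allow padding. Cancellations in coefficients are not a concern, since all coefficients are nonnegative counts of walks.
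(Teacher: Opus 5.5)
Your argument is correct, but it is a different reduction from the paper's. The paper works with traces rather than with the $s$-$t$ procedure. It first checks $\operatorname{tr}(A^\lambda)$ for a negative-degree monomial, to catch short negative cycles. Then, for each seed, it checks $p_\sigma(x)=\operatorname{tr}(A_\sigma^{|V_\sigma|})$. It argues that a longer negative cycle whose segments are hit by $U_\sigma$ closes up into a negative closed walk in $G_\sigma$. In the absence of negative cycles, every monomial of $p_\sigma$ is the weight of a genuine closed walk of $G$.

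Your vertex splitting instead anchors the cycle at $s=v_{\mathrm{out}}$ and $t=v_{\mathrm{in}}$, which always lie in $V_\sigma$. This buys you three things:
\begin{itemize}
\item The segmentation argument of \Cref{thm:dist} applies verbatim to the cut-open cycle.
\item Cycles through $v$ with at most $\lambda$ edges are caught by the direct edge $(v_{\mathrm{out}},v_{\mathrm{in}})$, so no separate short-cycle test is needed.
\item You never have to close a walk up inside $G_\sigma$. In the trace version, cutting $C$ into consecutive non-overlapping $\gamma$-segments can leave a wrap-around gap of nearly $3\gamma>\lambda$ edges, so some extra care, e.g.\ overlapping windows, is needed there.
\end{itemize}

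What the paper's version buys is that it needs neither the outer loop over $v$ nor the graph modification. Yours costs only a factor $n$ in time and $\mcO(\log n)$ extra workspace. It also requires re-checking the soundness half of \Cref{thm:dist} without the no-negative-cycle hypothesis, which you carry out correctly.

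Two of your phrasings are slightly too strong but harmless:
\begin{itemize}
\item $v_{\mathrm{out}}$-$v_{\mathrm{in}}$ walks correspond only to closed walks at $v$ that do not revisit $v$ in between.
\item Your per-$v$ ``iff'' is established only in the directions you actually use. A negative output yields a negative cycle, and a simple negative cycle through $v$ forces a negative output for that $v$. This is enough after taking the OR over all $v$.
\end{itemize}
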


\begin{proof}
	We use the same construction as in Theorem~\ref{thm:dist}, building for each seed $\sigma$ the compressed graph
	$G_\sigma=(V_\sigma,E_\sigma)$ with weights $W_\sigma(u,v)$ defined as the minimum weight of a $u$-$v$ walk using at most $\lambda$ edges.
	But before that if $\mathrm{tr}(A^\lambda)$ already contained a monomial with
	a negative exponent then we have detected a negative cycle.
	
	Let $A_\sigma$ be the weighted adjacency matrix:
	\[
		A_\sigma[u,v] =
		\begin{cases}
			x^{W_\sigma(u,v)} & (u,v)\in E_\sigma, \\
			1                 & u=v,               \\
			0                 & \text{otherwise.}
		\end{cases}
	\]
	
	Consider
	$p_\sigma(x) = \operatorname{tr}(A_\sigma^{|V_\sigma|}),$
	i.e., the sum of diagonal entries of $A_\sigma^{|V_\sigma|}$.
	
	\begin{observation}
		Every monomial in $p_\sigma(x)$ corresponds to a closed walk in $G_\sigma$ of length at most $|V_\sigma|$, and the degree of the monomial equals the total weight of the walk.
	\end{observation}
	\begin{claim}
		$G$ contains a negative-weight cycle iff for some seed $\sigma$, the polynomial 
		$p_\sigma(x)$ contains a monomial of negative degree \emph{or}
		$\operatorname{tr}(A^\lambda)$ contains one such monomial where $A$ is the adjacency
		matrix of the original graph.
	\end{claim}
	\begin{proof}	
		\textbf{(1) If a negative cycle exists.}
		Let $C$ be a negative-weight cycle in $G$.
		By the same hitting-set argument as in Theorem~\ref{thm:dist}, with probability at least $3/4$, all segments of $C$ are represented in $G_\sigma$, so $C$ induces a corresponding closed walk in $G_\sigma$ with total weight upper 
		bounded by that of $C$.
		
		Since $C$ has negative weight, repeating it yields closed walks of negative total weight. In particular, within the length bound $|V_\sigma|$, there exists a closed walk whose corresponding monomial in $p_\sigma(x)$ has negative degree, since self loops are allowed.

		\textbf{(2) If no negative cycle exists.}
		Then every cycle in $G$ has nonnegative weight. Hence every closed walk decomposes into simple cycles of nonnegative total weight, implying every closed walk has nonnegative weight. Therefore all monomials in $p_\sigma(x)$ have nonnegative degree.
	\end{proof}
	
	Thus no negative-degree monomial appears.
	Therefore, $G$ contains a negative cycle iff for some seed $\sigma$, the polynomial
	$p_\sigma(x)$ contains a monomial of negative degree or there is a short
	negative cycle containing at most $\lambda$ edges.
	We enumerate all seeds $\sigma \in \{0,1\}^{(\epsilon+6)\log n}$ and accept iff such a negative-degree monomial exists.
	
	The complexity is identical to Theorem~\ref{thm:dist}.
\end{proof}
\begin{corollary}[Shortest Path]
	There is an algorithm which, given a positively weighted directed graph 
	$G=(V,E)$, polynomially bounded $W:E\to \bbN_{>0}$ and vertices $s,t \in V$, 
	outputs a shortest $s$ to $t$ path (if it exists) in sublinear $\CLP$.
\end{corollary}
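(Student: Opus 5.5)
We would reduce path search to repeated calls of the distance oracle of \Cref{thm:dist}, using the standard greedy reconstruction on a positively weighted graph. Let $d(u,v)$ denote the $u$-$v$ distance in $G$, computed by \Cref{thm:dist} (no negative cycles exist since $W>0$). First compute $d(s,t)$; if it is $\infty$, report that no path exists. Otherwise walk forward from $s$ while storing only the current vertex $u$ and the remaining budget $D=d(u,t)$, both $\mcO(\log n)$ bits, on the work tape. At each step, scan the out-neighbours $v$ of $u$ in a fixed order and pick the first one satisfying
\[
W(u,v)+d(v,t)=d(u,t).
\]
Output $v$, set $u\gets v$ and $D\gets D-W(u,v)$, and repeat until $u=t$.

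For correctness, whenever $u\neq t$ and $d(u,t)<\infty$, the first edge of some shortest $u$-$t$ path satisfies the equality, so a successor always exists. Because all weights are strictly positive, $d(v,t)<d(u,t)$ along the chosen edge. Hence the walk never revisits a vertex, it is a simple path, and it terminates after at most $n-1$ steps. Its total weight telescopes to $d(s,t)$, so it is a shortest path.

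The main point to check is the resource accounting under catalytic composition. Each oracle call is a full run of the algorithm of \Cref{thm:dist}, which restores the catalytic tape on exit. Between calls we only keep $u$, $D$, the candidate $v$, and the compared values, all logarithmic. So the same sublinear catalytic tape of size $\frac{n}{2^{\Theta(\sqrt{\log n})}}$ is reused by all $\mcO(n^2)$ calls. Time stays polynomial, at most $n^2$ calls times a polynomial each, and workspace stays $\mcO(\log n)$. Edges are written to the output tape as they are found, so the path itself is never stored.

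Strict positivity is what rules out zero-weight cycles, which could otherwise trap the greedy walk. This is why the corollary is stated for $W:E\to\bbN_{>0}$.
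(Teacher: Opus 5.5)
Your proposal is correct and is essentially the paper's proof: walk greedily from $s$, keep only the current vertex on the work tape, and pick the first out-neighbour $v$ with $W(u,v)+d(v,t)=d(u,t)$, computing each distance with \Cref{thm:dist}. The paper leaves two points implicit that you spell out: termination via strict positivity, and reuse of the same catalytic tape across all oracle calls.
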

\begin{proof}
	We shall in fact show how to find the lexicographically first $s$ to $t$ shortest path. Suppose the lex-first shortest path is $(v_0=s,v_1,\ldots,v_{\ell-1},v_{\ell}=t)$. At the $i^{th}$ step, we store $v_i$ in the worktape. Now, to find $v_{i+1}$, we enumerate over all out-neighbors of $v_i$ and pick the lex-first $u$ such that $W(v_i,u)+\opn{dist}(u,t)=\opn{dist}(v_i,t)$, and set $v_{i+1}\gets u$. To compute $\opn{dist}$, we use \Cref{thm:dist}, and thus we have the desired time-space bound to find the path.
\end{proof}

\section{Applications} \label{sec:app}
In this section we give applications related to determinant and matchings in graphs.
\subsection{Determinant in near linear $\CLP$}
We start by giving a space-space tradeoff algorithm for determinant, using Newton's identities and in fact, using a formula relating the ordinary generating function (OGF) of elementary
symmetric polynomials to the power polynomials\footnote{Actually
to $\sum_{n=1}^{\infty}{e^{\frac{(-1)^{n-1}}{n}p_nt^n}}$.}.
Our starting point is the following well-known identity which we establish in
\Cref{app:det} for the sake of completeness.

\begin{equation} \label{eq:det}
	\det(A)=
	[t^n]\exp\left(\sum_{k=1}^{n}(-1)^{k+1}\frac{p_k}{k}t^k\right)
	\pmod {t^{n+1}}.
\end{equation}

This brings us to the following.

\begin{theorem}\label{thm:det}
	Let \(A\) be an \(n\times n\) matrix over integers. Given \(A,A^2,\ldots,A^n\), the value \(\det(A)\) is computable in DLOGTIME-uniform \(\TC^0 \subseteq \Log\).
\end{theorem}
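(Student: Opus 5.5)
The plan is to compute $p_k=\operatorname{tr}(A^k)$ for $k=1,\ldots,n$ from the given powers, and then evaluate the right-hand side of \eqref{eq:det}. Each trace is an iterated addition of $n$ integers of polynomial bit length, so $p_k$ is in DLOGTIME-uniform $\TC^0$. The remaining task is the coefficient extraction $[t^n]\exp(\cdot)$. A rational power series is awkward to handle directly, so I would avoid it and rewrite the identity over integers.

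Writing $E(t)=\exp\bigl(\sum_k (-1)^{k+1}p_k t^k/k\bigr)$ and expanding $\exp$ truncated at degree $n$ gives
\[
\det(A)=\sum_{(m_1,\ldots,m_n):\,\sum_k k m_k=n}\ \prod_{k=1}^n \frac{\bigl((-1)^{k+1}p_k\bigr)^{m_k}}{k^{m_k}\,m_k!}.
\]
This sum has superpolynomially many terms (partitions of $n$), so a direct $\TC^0$ evaluation is impossible. Instead I would use polynomial interpolation, as in \cite{HAB}. Concretely:
\begin{enumerate}
\item Set $L=\operatorname{lcm}(1,\ldots,n)$ and $S(t)=\sum_{k=1}^n (-1)^{k+1}(L/k)p_k t^k$, which has integer coefficients.
\item Then $E(t)=\sum_{j=0}^n S(t)^j/(L^j j!) \bmod t^{n+1}$.
\item Hence $n!\,L^n\det(A)=[t^n]\sum_{j=0}^n \frac{n!}{j!}L^{n-j}S(t)^j$, and every quantity here is an integer.
\end{enumerate}

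To extract the coefficient, I evaluate $S$ at the integer points $t=N^{i}$, with $N$ a large power of two, and Kronecker-substitute $t=2^{b}$ for a large enough $b$. Then $[t^n]S(t)^j$ can be read off as a block of bits of the integer $S(2^b)^j$. Here $S(2^b)$ has polynomially many bits and is obtained by iterated addition. Its $j$-th power, for $j\le n$, is an iterated product, which is in DLOGTIME-uniform $\TC^0$ by \cite{HAB}.

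I would take $b$ larger than the bit size of every coefficient of $S^j$, which is $\poly(n)$. Negative coefficients are handled by the same offset trick used in \Cref{short}: add a large multiple of $\sum_k t^k$, or separate the positive and negative parts. The final result is $\sum_j \frac{n!}{j!}L^{n-j}c_j$, where $c_j=[t^n]S^j$. Computing it requires iterated products (for $n!/j!$ and $L^{n-j}$; $L$ itself is a product of prime powers and is also computable in $\TC^0$), one iterated sum, and finally division by $n!L^n$. All of these are in DLOGTIME-uniform $\TC^0$ by \cite{HAB}, and $\TC^0\subseteq\Log$.

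I expect the main obstacle to be the Kronecker extraction of $[t^n]S(t)^j$ while keeping uniformity and handling signs. This requires a careful bound on the coefficient sizes of $S^j$: at most $(n\cdot L\cdot\max|p_k|)^{j}\cdot n^j$, which is $2^{\poly(n)}$. That bound is what lets $b$ be chosen polynomial, so that the bit extraction is a DLOGTIME-uniform projection. If the sign handling through the offset becomes messy, I would fall back to computing modulo many small primes and Chinese remaindering, as in \Cref{thm:apow}. Small primes dividing $n!L$ would then be skipped, or the division by them would be avoided by working with the integer form above.
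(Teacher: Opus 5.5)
Your argument is correct, but it takes a different route from the paper. The paper keeps the main computation over $\mathbb{Z}_q$ for polynomially bounded primes. It views $[t^n]\sum_{j\le n}\bigl(\sum_k(-1)^{k+1}p_kt^k/k\bigr)^j/j!$ as a coefficient of a degree-$n^2$ polynomial with rational coefficients. It recovers that coefficient by Lagrange interpolation at $n^2+1$ points modulo polynomially many primes $q>n^2$; such primes are coprime to every $k$, every $j!$ and every interpolation denominator. It then uses \cite{HAB} only for the final Chinese remaindering of the integer $\det(A)$. You instead clear denominators over $\mathbb{Z}$ with the factor $n!L^n$, replace interpolation by a single Kronecker evaluation at $t=2^b$, and rely on the iterated multiplication and exact division of \cite{HAB}. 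The paper's modular route avoids exact arithmetic on $2^{\poly(n)}$-size integers until the reconstruction. It also fits \Cref{thm:apow}, which already produces the entries of $A^k$ modulo small primes. Your route needs no choice of primes avoiding the denominators, at the price of the explicit coefficient-size bound (which you get right) and careful sign handling. On signs, the offset must be added after powering, not before. That is, add $M\sum_{d=0}^{n^2}2^{bd}$, with $\max_d|[t^d]S^j|<M\le 2^{b-1}$, to the integer $S(2^b)^j$, and subtract $M$ from the extracted block. Adding a multiple of $\sum_k t^k$ to $S$ itself would not work, since it changes $[t^n]S^j$. Your alternative of splitting $S=S_+-S_-$ and expanding binomially also works. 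Finally, the evaluation at the points $t=N^i$ is superfluous once you substitute $t=2^b$, and you can replace $L$ by $n!$ to avoid computing an lcm.
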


\begin{proof}
	From the given powers we first compute
	\[
		p_k=\operatorname{tr}(A^k), \qquad 1\le k\le n,
	\]
	which is just an iterated sum and hence is in uniform \(\TC^0\). By
	\cref{eq:det},
	\[
		\det(A)
		=
		[t^n]\sum_{j=0}^n{\frac{\left(\sum_{k=1}^{n}(-1)^{k+1}\frac{\operatorname{tr}(A^k)}{k}t^k \right)^{j}}{j!}}
	\]
	Thus the determinant is the coefficient of $t^n$ in a polynomial of
	degree $n^2$ over rationals.
	So we need to plug in
	$n^2 + 1$ distinct values for $t$ and do Lagrange interpolation to
	recover the coefficient of $t^n$.
	Consider the rationals that occur in the expression for the
	polynomial and their denominators (namely $j!,k$ for $j,k \leq n$) and observe that none of
	the denominators has a prime factor exceeding $n$.
	Moreover the numerator and denominator of any rational whose sum
	constitutes a coefficient of $t^n$ is bounded by $2^{n^c}$ for some
	constant $c$.
	Thus we can do Chinese remaindering with polynomially many primes each
	of which
	is polynomially bounded in $n$ \emph{and} greater than $n^2$ precluding
	it from dividing any denominator. Since the determinant is an integer,
	the process of Chinese Remaindering modulo the primes will ensure that
	we obtain the correct determinant. We will use \cite{HAB} for this,
	placing it in DLOGTIME-uniform $\TC^0$ which is contained in $\Log$,
	completing the proof
	\footnote{For our purposes the weaker Logspace-uniform $\NC^1$ bound
		proved by \cite{CDL} would also have sufficed.}.
\end{proof}
Therefore we have the following as corollary.
\begin{corollary}\label{cor:det}
	Let \(A\) be an \(n\times n\) matrix over integers, and $k\in[1, n]$ a parameter. Then, $\det(A)$ can be computed in
	\[
		\C\TISP\left[\poly(n^{\log (3k)}),\mcO(\log n \cdot \log (3k)), \mcO\left(\frac{n}{k} \cdot \log ^2 n\right)\right].
	\]
\end{corollary}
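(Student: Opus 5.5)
We combine \Cref{thm:det} with \Cref{thm:apow}, composing a logspace procedure with the catalytic matrix-powering routine that serves as an oracle. By \Cref{thm:det}, $\det(A)$ is computed by a logspace transducer $D$ that reads entries $A^\ell[s,t]$ for $1\le \ell\le n$ (and $s,t\in\Z_n$) as its input. We run $D$ without writing its input down. Whenever $D$ queries a bit of $A^\ell[s,t]$, we obtain it afresh by invoking \Cref{nl:alg23} (together with the Chinese remaindering wrapper from the proof of \Cref{thm:apow}) on $(A,s,t,\ell,k)$. This recomputation costs time, but the logspace composition of $D$ with a subroutine that restores its catalytic tape after each call stays valid.

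The key steps, in order, are as follows.
\begin{enumerate}
\item Fix the arithmetic. Each $A^\ell[s,t]$ is an integer of polynomially many bits, so we compute it modulo polynomially many primes $q$ with $\log q=\mcO(\log n)$ and reconstruct it by \cite{HAB} in logspace. Alternatively, we run \Cref{thm:det} modulo each prime $q>n^2$ and Chinese-remainder only at the end. In that case every query is just $A^\ell[s,t]\bmod q$, obtained directly from \Cref{nl:alg23}.
\item Bound the cost of one call. With $\ell\le n$, a single invocation of \Cref{thm:apow} uses time $\poly(n,(3k)^{\log n})=\poly(n^{\log(3k)})$, work space $\mcO(\log n\cdot\log(3k)+\log n)=\mcO(\log n\log(3k))$, and catalytic space $\mcO(\frac nk\log n\log\ell)=\mcO(\frac nk\log^2 n)$. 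All blocks $U,V,W_1,\dots,W_r$ are restored by \Cref{claim:matpowregunchanged} and the unstirring in \Cref{nl:alg23}.
\item Compose. The catalytic tape can be reused across calls because each call restores it. $D$ makes at most $\poly(n)$ queries, and it keeps only its own $\mcO(\log n)$ configuration between calls. Hence the total work space is $\mcO(\log n\log(3k))$, the catalytic space is $\mcO(\frac nk\log^2 n)$, and the time is $\poly(n)\cdot\poly(n^{\log(3k)})=\poly(n^{\log(3k)})$. These are exactly the bounds in the statement.
\end{enumerate}

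The main obstacle is the interface between the two results. \Cref{thm:det} is phrased for inputs given on a read-only tape, whereas here every input symbol is a recomputed value. We must check that the Chinese-remaindering reconstruction of an entry, or the choice to work modulo $q$ throughout, composes within the stated work space. We also need the negative-entry case: residues modulo $q$ handle this, since \Cref{nl:alg23} works over $\Z_q$ for arbitrary integer $A$ reduced mod $q$.

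Finally, we must make sure the primes used inside \Cref{thm:apow} are the same primes as those in \Cref{thm:det}. Choosing all primes above $n^2$ resolves this: the powers are then computed modulo the same $q$ in which the determinant identity \eqref{eq:det} is evaluated, because no denominator $j!$ or $k$ vanishes modulo such $q$.
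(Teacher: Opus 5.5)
Your proposal is correct and follows the paper's route. The paper's proof simply combines \Cref{thm:apow} (with $\ell\le n$, so that $(3k)^{\log\ell}\le n^{\log(3k)}$ and $\log\ell\le\log n$) with the logspace coefficient extraction of \Cref{thm:det}; your composition, which recomputes each queried entry and reuses the restored catalytic tape across calls, spells out the detail that one-line argument leaves implicit.
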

\begin{proof}
	This follows from \Cref{thm:apow,thm:det}, by combining space-space tradeoff of matrix powering and determinant coefficient-extraction.
\end{proof}

\begin{remark}
	We note that the determinant can also be shown in near linear $\CLP$ using Csanky's algorithm, see \cite{Kozen1992}, or by Berkowitz's method \cite{Berkowitz84}.
\end{remark}

We now give an analogous space-space tradeoff algorithm for computing Pfaffian by relating it to the determinant.
\subsection{Pfaffian of a Matrix in near linear $\CLP$}
We need the following identity which we prove in \Cref{app:pfaff},
\begin{equation} \label{eq:pfaff}
	\pf(A) = [t^m] \exp\left(\frac{1}{2} \cdot \sum_{k = 1}^{m} (-1)^{k+1}\frac{\operatorname{tr}(B^k)}{k}t^k\right) \pmod{t^{m+1}},
\end{equation}
where $m = n/2$, for some even integer $n$, and $B = -JA$. Here $J$ is a fixed skew symmetric integer matrix depending only on $n$, see \Cref{app:pfaff} for more details. Using this notation we have the following.
\begin{theorem}\label{thm:pfaff}
	Let $A$ be a $n \times n$ skew symmetric matrix over integers. Given \(A, B, B^2, \dots, B^m\), the value $\pf(A)$ is computable by DLOGTIME-uniform \(\TC^0\) circuits.
\end{theorem}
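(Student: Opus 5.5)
The plan follows the proof of \Cref{thm:det}, with \cref{eq:pfaff} in place of \cref{eq:det}. From the given powers $B,B^2,\ldots,B^m$ we first compute the power sums $q_k=\operatorname{tr}(B^k)$ for $1\le k\le m$. Each is an iterated sum of $n$ integers of polynomial bit length, hence lies in DLOGTIME-uniform $\TC^0$. Truncating the exponential in \cref{eq:pfaff} at degree $m$ in $t$ gives
\[
	\pf(A)=[t^m]\sum_{j=0}^{m}\frac{1}{j!\,2^j}\left(\sum_{k=1}^{m}(-1)^{k+1}\frac{q_k}{k}t^k\right)^{j}.
\]
This is valid because every term with $j>m$ contributes only to powers $t^{j}$ and higher, so it does not affect the coefficient of $t^m$. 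The right hand side is a polynomial in $t$ of degree at most $m^2$ with rational coefficients.

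To extract the coefficient of $t^m$ we evaluate at $m^2+1$ distinct points and apply Lagrange interpolation. All arithmetic is done modulo many small primes $p$. The denominators occurring are $j!$, $2^j$ and $k$ with $j,k\le m$. Choosing primes with $n^2<p\le\poly(n)$ ensures no prime divides any of these denominators, nor the Vandermonde denominators from interpolation at the points $0,1,\ldots,m^2$. Modulo each such $p$ the expression is therefore a well-defined element of $\Z_p$. It is computed by iterated products (the powers of the inner sum), iterated sums, and inversions modulo a small prime. Each of these operations is in DLOGTIME-uniform $\TC^0$ by \cite{HAB}, and $\TC^0$ is closed under constant-depth composition.

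The step needing care is recovering the integer $\pf(A)$ from its residues. First, $\pf(A)$ is an integer, since by definition it is a signed sum of products of entries of $A$. Second, it is bounded in magnitude by $(n\|A\|_\infty)^{n/2}$, or alternatively by $|\det A|^{1/2}$ via Cayley's theorem, hence by $2^{n^c}$. So polynomially many primes in the range above determine $\pf(A)$, including its sign, through Chinese remaindering with a symmetric representative. This CRT step is in DLOGTIME-uniform $\TC^0$ by \cite{HAB}.

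One has to check that the rational identity reduces correctly mod $p$. Since all denominators are units mod $p$, the reduction map from $\Z_{(p)}$ to $\Z_p$ is a ring homomorphism, so the reduction of the rational expression equals $\pf(A)\bmod p$. Note also that $J$ is a fixed matrix and $B=-JA$ is given as input, so no matrix products beyond the provided powers are needed.
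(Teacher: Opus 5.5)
Your proposal is correct and follows the paper's own route: take traces of $B^k$, truncate the exponential in \cref{eq:pfaff}, interpolate modulo polynomially many primes exceeding $n^2$, and Chinese-remainder via \cite{HAB}. This is exactly the argument of \Cref{thm:det} transferred to the Pfaffian. Your handling of the extra $2^j$ denominators and of the integrality and magnitude bound on $\pf(A)$ spells out what the paper's ``similar to the proof of \Cref{thm:det}'' leaves implicit.
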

\begin{proof}
	The proof is similar to the proof of \Cref{thm:det}.
\end{proof}

Thus, we get the following as corollary.
\begin{corollary}\label{cor:pfaff}
	Let \(A\) be a skew symmetric \(n\times n\) matrix over integers, with $n$ even, and $k\in[1, n]$ a parameter.
	Then, $\pf(A)$ can be computed in
	\[
		\C\TISP\left[\poly(n^{\log (3k)}),\mcO(\log n \cdot \log (3k)), \mcO\left(\frac{n}{k} \cdot \log ^2 n\right)\right].
	\]
\end{corollary}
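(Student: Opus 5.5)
The plan mirrors the proof of \Cref{cor:det}: apply \Cref{thm:apow} to $B$, then feed its output into \Cref{thm:pfaff}.

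First, form $B=-JA$. Here $J$ is a fixed integer skew symmetric matrix depending only on $n$; for instance, the block diagonal matrix with $2\times 2$ blocks $\begin{pmatrix}0&1\\-1&0\end{pmatrix}$. Each entry of $B$ is then a signed entry of $A$ (or a tiny sum of such entries), so any entry $B[u,v]$ can be recomputed from the input in logspace whenever it is needed. We never store $B$. Wherever \Cref{nl:P23} reads $A[u,v]$, it instead calls this logspace subroutine for $B[u,v]$. This leaves time, workspace and catalytic space unchanged up to constant factors.

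Second, since $m=n/2\le n$, we have $\ell\le n$ in \Cref{thm:apow}. So for every $k'\le m$ and every pair $u,v$, the entry $B^{k'}[u,v]$ is computable in $\C\TISP[\poly(n,(3k)^{\log n}),\mcO(\log n\log(3k)),\mcO(\frac nk\log^2 n)]$. The time bound equals $\poly(n^{\log(3k)})$. The entries are integers of polynomial bit length, so we recover them by Chinese remaindering over polynomially many small primes, exactly as in the proof of \Cref{thm:apow}.

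Third, compose this with the DLOGTIME-uniform $\TC^0\subseteq\Log$ procedure of \Cref{thm:pfaff}. That procedure only needs $\operatorname{tr}(B^{k'})$ for $k'\le m$, reduced modulo polynomially many polynomially bounded primes exceeding $n^2$. We run it as a logspace machine whose input bits are supplied on demand by the catalytic subroutine. Each query restores the catalytic tape before returning, so the catalytic tape is reused across queries, and the total catalytic space stays $\mcO(\frac nk\log^2 n)$. The workspace is the logspace of the outer machine plus one query's workspace plus a counter, i.e.\ $\mcO(\log n\log(3k))$. The time is polynomially many queries times $\poly(n^{\log(3k)})$.

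The main obstacle I expect is this composition. The outer logspace machine may request the same bit of some $\operatorname{tr}(B^{k'})\bmod q$ many times, so we must check that recomputation, rather than storage, keeps the time polynomial in $n^{\log(3k)}$. It does, since the logspace machine makes at most polynomially many queries. We also need the arithmetic modulo each prime $q>n^2$ to be compatible with the divisions by $k$ and $j!$ in \cref{eq:pfaff}, together with the extra factor $\frac12$. Since $q$ is odd and larger than $n^2$, all of these denominators are invertible modulo $q$, as in the proof of \Cref{thm:det}. Finally, $\pf(A)$ is an integer whose magnitude is bounded by $2^{\poly(n)}$, so Chinese remaindering via \cite{HAB} recovers it exactly.
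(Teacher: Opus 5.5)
Your proposal is correct and takes the same route as the paper: apply the matrix-powering tradeoff of \Cref{thm:apow} to $B=-JA$ for exponents $\ell\le m\le n$, then feed the traces into the $\TC^0$ coefficient extraction of \Cref{thm:pfaff}. Your extra details (computing entries of $B$ on demand, reusing the restored catalytic tape across queries, and inverting $2$, $k$ and $j!$ modulo primes $q>n^2$) fill in what the paper's one-line proof leaves implicit.
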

\begin{proof}
	This follows from \Cref{thm:apow,thm:pfaff}, by combining space-space tradeoff of matrix powering and Pfaffian coefficient-extraction.
\end{proof}

We give one application on matchings which uses the computation of Pfaffian.
\subsection{Planar Perfect Matching in near linear $\CLP$}
\begin{restatable}{theorem}{PlMinWTPM}\label{thm:PlMinWtPM}
	Given a planar graph $G=(V,E)$ on $n$ vertices, and a polynomially bounded weight assignment $W_{input}:E\to [n^c]$, we can compute a minimum weight perfect matching of $G$ in near linear $\CLP$.
\end{restatable}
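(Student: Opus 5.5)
The plan is to combine three ingredients: a space-efficient Pfaffian orientation (Kasteleyn), the Pfaffian algorithm of \Cref{cor:pfaff} applied to a univariate polynomial matrix, and catalytic isolation in the style of \cite{ACD} to recover the matching itself.

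\textbf{Step 1: Kasteleyn orientation.} First compute a Pfaffian orientation of the planar graph $G$ in logspace, following the logspace version of Kasteleyn's construction used in \cite{MSV}. This involves three sub-steps: compute a planar embedding in logspace \cite{AM04,DP}, take a spanning tree of $G$, and orient the remaining edges face by face using the dual tree. With this orientation, every perfect matching contributes with the same sign to $\pf(A)$. Here $A$ is the skew-symmetric matrix with $A[u,v]=x^{W(u,v)}$ for oriented edges $u\to v$, $A[v,u]=-A[u,v]$, and zeros elsewhere.

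\textbf{Step 2: weighted decision.} Because all signs agree, $\pf(A)$ is, up to a global sign, the generating polynomial $\sum_M x^{W(M)}$ of perfect matchings. Its least-degree monomial therefore has degree equal to the minimum weight of a perfect matching, and no cancellation can occur. To compute it, evaluate $\pf(A)$ at polynomially many points $x$ modulo polynomially many small primes. For each evaluation, apply \Cref{cor:pfaff} with $k=1$: this takes polynomial time, logarithmic workspace (up to the $\log(3k)$ factor, which is $\mcO(1)$ here), and catalytic space $\mcO(n\log^2 n)$, so it is near linear. Then interpolate and Chinese-remainder in logspace via \cite{HAB}, just as in \Cref{short}.

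\textbf{Step 3: search.} To output a minimum weight perfect matching, we need it to be unique. The plan is to perturb the weights to $W'=n^{c'}W_{input}+w$, where $w$ is an isolating weight assignment. Scaling $W_{input}$ by a large factor preserves the order of minimum-weight matchings, and $w$ then breaks ties.

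Obtaining an isolating $w$ is the main obstacle. Planar isolation via \cite{DKR}-style deterministic weights gives only $\mcO(\log^2 n)$-bit weights inside the clean logspace budget, so I would instead follow the catalytic isolation of \cite{ACD}. The idea is to read candidate weights from the catalytic tape and test whether the minimum-weight perfect matching under $W'$ is unique. The test works as follows: with the unique minimum matching $M^*$ of weight $\mu'$, an edge $e$ lies in $M^*$ iff $W'(e)$ plus the minimum perfect matching weight of $G-\{u,v\}$ equals $\mu'$. Verification reuses Step 2 on $G-\{u,v\}$, which is still planar, and we check that the selected edges form a perfect matching. If the test fails, we compress or modify the catalytic tape as in \cite{ACD}, or locate a threshold edge to refine the weights, and we must restore the tape afterwards. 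The care needed here is to keep every nested Pfaffian call within the same $\mcO(n\log^{\mcO(1)}n)$ catalytic budget. This holds because the calls are sequential and each one restores its own tape. The isolating weights themselves need only $\mcO(n\log n)$ catalytic bits, so the total stays near linear.

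Once isolation succeeds, output every edge $e$ passing the test above. Each test is one invocation of Step 2, so the overall algorithm runs in near linear $\CLP$.
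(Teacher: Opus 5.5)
Your proposal takes essentially the same approach as the paper. The paper computes a Kasteleyn/\cite{MSV} Pfaffian orientation and applies \Cref{cor:pfaff} with $k=1$ to get weighted decision (indeed counting) in near linear $\CLP$. It then reduces search to this oracle via \cite{ACD}'s catalytic compress-or-compute isolation with threshold edges, staying within $\mcO(n\log^2 n)$ catalytic bits since $m=\mcO(n)$.

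One point to make explicit. A threshold edge is used to \emph{compress}: you drop its weight and record its index, which frees about $9\log m$ bits. It is not used to refine the weights. Also, you should not wait for isolation to eventually succeed. Once roughly $\Theta(m\log m)$ bits have been freed (the paper stops at about $90m\log m$ out of $100m\log m$), you run the self-reducibility search for a min-weight perfect matching directly in the freed space, and then decompress the tape in reverse.
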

\begin{proof}
	Notice first that \cite{Kastelyn,MSV} and \Cref{cor:pfaff} (setting $k=1$) immediately give a way to count the number of min-weight perfect matchings of $G$ in near linear $\CLP$.

	We conclude the proof by the reduction from search to weighted decision
	as described in \cite{ACD}. For a self-contained proof
	see \Cref{app:MNProof}.
\end{proof}

Our second application concerns planar bipartite perfect matching which builds on \Cref{sec:subd}.
\subsection{Planar Bipartite Perfect Matching in sublinear $\CLP$} \label{subsec:mn}
Let $G$ be a planar bipartite graph. Miller \& Naor \cite{MN} (see also
\cite{DGKT}) had proved
that detecting a perfect matching in an $n$ node graph reduces to
detecting the absence of a negative cycle in a weighted version of the
dual graph. Further, if there is no such negative cycle then shortest
distances in the weighted dual yield an explicit representation of a
perfect matching. Since the dual graph also has $\mcO(n)$ vertices,
from Euler's relation, this yields a sublinear catalytic space algorithm for
decision and search for bipartite planar perfect matching. The following is the
version of the theorem proved in \Cref{app:MN}.

\begin{theorem}[Miller \& Naor \cite{MN}]\label{thm:MN}
    Given a planar flow network
    $(G = (V,E),c: E \rightarrow \mathbb{N}\cup\{0\}, d: V \rightarrow \mathbb{Z})$
    where the image of $c,d$ are in unary, the detection of a feasible flow
    and its computation can be Logspace reduced to finding the existence of a
    negative cycle in a planar digraph on $O(|V|)$ vertices
    with unary integer weights and to distance
    in a planar digraph on $O(|V|)$ vertices without negative cycles with unary
    integer weights.
\end{theorem}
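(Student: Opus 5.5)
We follow the classical Miller--Naor route: view a feasible flow as a circulation in an augmented network, and encode circulations through the planar dual by face potentials. Concretely, we first reduce demands to a circulation problem. Fix a planar embedding of $G$; computing one is in $\Log$ by \cite{AM04,DP}. Rather than adding a super source and super sink, which would destroy planarity, we take a spanning tree $T$ of $G$, computable in $\Log$ by \cite{Reingold08}. We route the demands $d$ along $T$ to obtain an initial flow $f_0$ that satisfies all conservation constraints but possibly violates capacities. This is a logspace computation: the flow on a tree edge is the sum of $d$ over the subtree it cuts off, an iterated sum of unary numbers. Every feasible flow then has the form $f=f_0+g$, where $g$ is a circulation satisfying $-f_0(e)\le g(e)\le c(e)-f_0(e)$ (with antiparallel arcs handled by orienting each edge and allowing signed flow).

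Next we pass to the dual. In a connected planar graph, circulations correspond exactly to potential differences on faces: $g(e)=\phi(\text{left}(e))-\phi(\text{right}(e))$ for some $\phi$ on the faces, unique up to an additive constant. The capacity constraints on $g$ then become difference constraints $\phi(F)-\phi(F')\le w$ on dual vertices. Each primal edge $e$ contributes two dual arcs, one of weight $c(e)-f_0(e)$ and one in the opposite direction of weight $f_0(e)$. All weights are unary integers, since $f_0$ is a sum of at most $|V|$ unary demands. By Euler's relation the dual has $O(|V|)$ vertices, and it is planar. The dual digraph, together with its weights, is computable in $\Log$ from the embedding by walking around faces with the rotation system.

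Finally we apply the standard theory of difference constraints. The system $\phi(F')-\phi(F)\le w(F,F')$ is feasible iff the constraint digraph has no negative cycle. When it has none, a solution is obtained by adding a root $r$ with weight-$0$ arcs to every face and setting $\phi(F)=\operatorname{dist}(r,F)$. Adding $r$ may break planarity, so instead we fix a single face $F_0$ and use $\operatorname{dist}(F_0,F)$; this needs a small argument that $F_0$ reaches all faces. That holds because the dual is strongly connected: each primal edge yields arcs in both directions, and the dual of a connected plane graph is connected. Given these distances, the flow $f(e)=f_0(e)+\phi(\text{left}(e))-\phi(\text{right}(e))$ is computable in $\Log$, which gives the claimed reduction to negative-cycle detection and to distance computation.

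The main obstacle is keeping the reduction logspace while preserving planarity and unary weights. The delicate parts are the tree routing of $f_0$, the consistent face and orientation conventions in the dual, and the strong-connectivity argument. The algebraic correspondence between planar circulations and face potentials is standard.
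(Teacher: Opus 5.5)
Your proposal is correct and takes essentially the same route as the paper's appendix sketch: route the demands along a spanning tree to get $f^{(0)}$, put the residual capacities $c(e)-f^{(0)}(e)$ and $c(e^R)+f^{(0)}(e)$ (your $f_0(e)$ when $c(e^R)=0$) on the two dual arcs of each edge, use the absence of negative cycles as the feasibility criterion, and take differences of shortest distances from a fixed dual vertex as the correcting circulation $f^{(1)}$. Your difference-constraint phrasing, and the explicit logspace ingredients (embedding, spanning tree via Reingold, strong connectivity of the dual so that all distances are finite), fill in details the paper leaves implicit.
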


Therefore we get the following.
\begin{theorem}\label{thm:MWPlBPM}
    There is an algorithm that, given a bipartite planar graph
    on $n$ vertices, can find a perfect matching
    of the graph in sublinear $\CLP$.
\end{theorem}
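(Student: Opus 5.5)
The plan is to reduce perfect matching in a bipartite planar graph $G=(V_1\sqcup V_2,E)$ to a feasible-flow instance and then apply \Cref{thm:MN}, using \Cref{cor:negCycle} and \Cref{thm:dist} as the sublinear $\CLP$ subroutines. Concretely, orient every edge from $V_1$ to $V_2$ and give it capacity $c(e)=1$. Set demands $d(v)=-1$ for $v\in V_1$ (one unit of supply) and $d(v)=+1$ for $v\in V_2$ (one unit of demand). This network is planar, since orienting edges preserves the embedding, and all capacities and demands are in unary; computing it is clearly in logspace. By integrality of flows with integral capacities, a feasible flow exists iff $G$ has a perfect matching, and an integral feasible flow is exactly the indicator vector of a perfect matching.

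Next, invoke \Cref{thm:MN}. It gives a logspace reduction that outputs a planar digraph $H$ on $O(n)$ vertices (essentially the weighted dual) with unary integer weights. Feasibility of the flow is equivalent to $H$ having no negative cycle. If none exists, the flow is recovered from shortest-distance values in $H$ from a fixed root face: the flow on an edge is read off from the difference of potentials of the two faces adjacent to it.

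The algorithm then runs as follows. First, decide whether $H$ has a negative cycle using \Cref{cor:negCycle}; if it does, report that there is no perfect matching. Otherwise, for each edge $e$ of $G$, compute the distances to the two dual vertices adjacent to $e$ by \Cref{thm:dist}, and output $e$ iff its derived flow value is $1$.

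Composition is the step that needs care, because logspace reductions do not compose naively with catalytic machines. The plan is the standard one: we never write $H$ down. Whenever the distance or negative-cycle algorithm queries an entry of $H$'s weighted adjacency matrix, we recompute it on the fly with the logspace reduction, using only clean workspace. The catalytic tape therefore stays devoted to the subroutine, and its size is governed by $|V(H)|=O(n)$, giving $\frac{n}{2^{\Theta(\sqrt{\log n})}}$ catalytic space. The time remains polynomial, since polynomially many subroutine calls each incur polynomial overhead from recomputation.

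I expect the main obstacle to be confirming that the Miller--Naor potentials computed by \Cref{thm:dist} give an \emph{integral} $0/1$ flow, rather than merely some feasible flow. If the derived flow is only guaranteed feasible, I would first check that unary integer weights force integer potentials, which yields an integral flow. Failing that, I would fall back on an isolation-style perturbation of the weights, shifting them by small polynomial amounts so that the resulting flow is unique and hence integral. This would keep the edge weights polynomially bounded, so \Cref{thm:dist} still applies.
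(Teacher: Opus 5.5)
Your proposal is correct and follows the paper's route: you feed the unit-capacity network with demands $\mp 1$ on the two sides into \Cref{thm:MN}, use \Cref{cor:negCycle} to decide feasibility, and use \Cref{thm:dist} on the $O(n)$-vertex dual to recover the flow, with your on-the-fly recomputation of $H$ supplying the composition detail the paper leaves implicit. Your integrality worry is already settled by your first remedy: $f^{(0)}$ and the dual weights $c_1$ are integral, so the distances and hence $f=f^{(0)}+f^{(1)}$ are integral, and $c(a,b)=1$, $c(b,a)=0$ force $f(a,b)\in\{0,1\}$, so the isolation fallback is unnecessary (note that the recovered flow is $f^{(0)}+f^{(1)}$, not the potential difference alone).
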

\begin{proof}
    Immediate from \Cref{thm:dist,cor:negCycle}, sublinear weighted distance and negative cycle detection, combined with \Cref{thm:MN}.
\end{proof}

\section{Conclusion}
We have shown near linear catalytic log space algorithms for determinant computation, planar matchings, and sublinear catalytic logspace algorithms for distances, and bipartite planar graphs. Couple of interesting questions in these lines are:\begin{itemize}
	\item Can one design sublinear space algorithms that simultaneously run in polynomial time, catalytic or otherwise, for determinant computation.
	\item Can we have near linear catalytic logspace algorithms for general matchings; note that general matchings have recently been put in $\CL$, \cite{CDKMS}.
\end{itemize}

\section*{AI disclosure}
The Pfaffian identity \cref{eq:pfaff} was discovered through ChatGPT while
trying to extend \Cref{thm:det} to compute the Pfaffian. We realised
that Berkowitz's method \cite{Berkowitz84} can also be adapted to give a
near linear $\CLP$ through Chat GPT. Chat GPT Astra was also used for routine
formatting and proof verification. The remaining manuscript remains human
generated and we take full responsibility for its correctness.

\bibliographystyle{alpha}
\bibliography{biblio}

\appendix
\crefalias{section}{appendix}

\section{\texorpdfstring{Proof of \Cref{drep}}{Proof of Lemma~\ref{drep}}}\label{ap1}
An undirected graph $D$ is an $(N,\Delta,\alpha)$-expander graph if it is a $\Delta$-regular connected graph on $N$ vertices, and the second largest eigenvalue (the absolute value) of its normalized adjacency matrix is bounded by $\alpha$ ($0\le \alpha\le 1$).

\begin{lemma}[\cite{lubotzky1988ramanujan,morgenstern1994existence}]\label{expc}
    For every prime $p$ and every positive integer $k$, there exists an infinite family of expander graphs $\left\{D_N\right\}_N$, such that each $D_N$ is an $(N, \Delta, \alpha)$-expander for $\Delta=p^k+1$ and $\alpha \leq \frac{2 \sqrt{\Delta-1}}{\Delta}$. Furthermore, there is a logspace algorithm (with simultaneous polylogarithmic runtime) that, given, $v\in\{0,1\}^{\log N}$ and $i\in[\Delta]$, computes the $i^{th}$ neighbor of $v$ in $D_N$.
\end{lemma}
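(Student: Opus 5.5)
This is a cited result, so the plan is to recall the Lubotzky--Phillips--Sarnak construction (for $\Delta=p+1$, $p$ an odd prime) and Morgenstern's generalization (for $\Delta=p^k+1$, including $p=2$). The proof splits into two parts: the spectral bound, which is imported from \cite{lubotzky1988ramanujan,morgenstern1994existence}, and the explicitness claim, which is checked directly.

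\textbf{Construction.} For LPS, fix a second prime $q\equiv 1\pmod 4$ with $q\neq p$. Consider the integer quaternions $a_0+a_1i+a_2j+a_3k$ of norm $p$, normalized by $a_0>0$ odd and $a_1,a_2,a_3$ even. By Jacobi's four-square theorem there are exactly $p+1$ of them, and they are closed under conjugation. Map each one to a matrix in $\mathrm{PGL}_2(\mathbb{F}_q)$, or $\mathrm{PSL}_2(\mathbb{F}_q)$ depending on the Legendre symbol $\left(\frac{p}{q}\right)$, using a fixed square root of $-1$ in $\mathbb{F}_q$. The Cayley graph on these $p+1$ generators is $(p+1)$-regular and symmetric, since the generator set is closed under inverses. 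As $q$ ranges over primes, $N\approx q^3$ ranges over an infinite set. Morgenstern's construction is the function-field analogue: it uses $\mathrm{PGL}_2$ over a finite extension of $\mathbb{F}_{p^k}[x]/(g)$ and generators coming from a norm form over $\mathbb{F}_{p^k}[x]$, which gives degree $p^k+1$.

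\textbf{Spectral bound.} The bound $\alpha\le 2\sqrt{\Delta-1}/\Delta$ is the Ramanujan property. It rests on the Ramanujan--Petersson conjecture, proved by Deligne in the LPS case and by Drinfeld in the function-field case. This is the deep step, and I would not reprove it; I would only check that the normalization (dividing adjacency eigenvalues by $\Delta$) matches the statement. Connectivity follows from the same eigenvalue gap.

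\textbf{Explicitness.} A vertex $v\in\{0,1\}^{\log N}$ is decoded as a $2\times 2$ matrix over $\mathbb{F}_q$, with $O(\log q)=O(\log N)$-bit entries, in some canonical projective representative. The $i$-th generator is obtained by enumerating quadruples $(a_0,\dots,a_3)$ with $\sum a_j^2=p$ in a fixed order. Since $p$ is a constant this is $O(1)$ work, apart from computing $\sqrt{-1}\bmod q$, which is computed once; for a fixed family $q$ can be hardwired. The neighbor is then one matrix multiplication mod $q$, followed by renormalization to the canonical representative, which needs one modular inverse. All of this is arithmetic on $O(\log N)$-bit numbers, so it runs in logspace and polylogarithmic time.

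The main obstacle is bookkeeping rather than mathematics. Vertex labels must be a bijection with the group elements that is computable both ways in logspace, and $N$ is restricted to the admissible values. In the application (\Cref{drep}) I would handle the latter by choosing the smallest admissible $N\ge n^{O(1)}$; the random walk on the expander only needs some $N$ of the right order, not $N$ exactly.
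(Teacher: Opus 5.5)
The paper gives no proof of this lemma; it imports it from \cite{lubotzky1988ramanujan,morgenstern1994existence}. Your plan (cite the Ramanujan bound, check explicitness by hand) is in the same spirit, but the family you describe does not satisfy the statement, because you include bipartite graphs. The paper's $(N,\Delta,\alpha)$-expander bounds the second largest eigenvalue \emph{in absolute value}.

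When $\left(\frac{p}{q}\right)=-1$, your graph is the Cayley graph of $\mathrm{PGL}_2(\mathbb{F}_q)$. Every generator has determinant $p$, a non-square mod $q$, so each edge crosses between $\mathrm{PSL}_2(\mathbb{F}_q)$ and its other coset. The graph is therefore bipartite, $-\Delta$ is an eigenvalue, and the normalized $\alpha$ equals $1$. The Ramanujan bound you import only controls eigenvalues other than $\pm\Delta$, so this is not a normalization check. It also matters downstream: at $\alpha=1$, \Cref{exw} gives only the bound $\beta$, with no decay in $K$.

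You have to restrict to primes $q\equiv 1\pmod 4$ with $\left(\frac{p}{q}\right)=1$, i.e.\ the $\mathrm{PSL}_2(\mathbb{F}_q)$ case; there are infinitely many by quadratic reciprocity and Dirichlet. Likewise you must take only the non-bipartite members of Morgenstern's families.

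Relatedly, connectivity does not follow from ``the same eigenvalue gap''. The Ramanujan bound says nothing about the multiplicity of the eigenvalue $\Delta$, and LPS prove separately that the generators generate $\mathrm{PSL}_2(\mathbb{F}_q)$. You need connectivity, non-bipartiteness and the Ramanujan bound together to get $\alpha\le 2\sqrt{\Delta-1}/\Delta$.

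Three smaller points.
\begin{itemize}
\item Your normalization ``$a_0>0$ odd, $a_1,a_2,a_3$ even'' requires $p\equiv 1\pmod 4$. For $p\equiv 3\pmod 4$ or $p=2$ no such representation exists, so in those cases you are relying entirely on Morgenstern.
\item $q$ cannot be hardwired, since it indexes the family. A brute-force search for $\sqrt{-1}\bmod q$ runs in logspace but takes $\Theta(q)=\Theta(N^{1/3})$ time. To get polylogarithmic time per neighbor query, treat $(q,\sqrt{-1}\bmod q)$ as part of the description of $D_N$, computed once; that is all \Cref{drep} needs.
\item Your closing remark about non-admissible $N$ lies outside the lemma, which only claims an infinite family. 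In the application it would also need a near-uniform map from $V(D_{N'})$ onto $\mathcal{H}_{m,n}$, so that the bad set keeps density $O(\beta)$.
\end{itemize}
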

\begin{definition}[Expander random walk]
    For any $(N, \Delta, \alpha)$-expander $D=\left(V_D, E_D\right)$, an expander random walk of length $K-1$ is $\left(X_1, \ldots, X_K\right)$ where $X_1$ is the uniform distribution over $[N]$, and for every $i>1, X_i$ is the uniform distribution over the $\Delta$-sized neighborhood of the $(i-1)^{\text {th }}$-vertex in the random walk.
\end{definition}

Expander random walks of length $K$ over $[N]$ behaves similar to uniformly sampling $K$ elements from $[N]$, as follows:
\begin{lemma}\label{exw}
    Let $D$ be an $(N, \Delta, \alpha)$-expander, and let $\mathcal{B} \subset V_D$ with $\beta=\dfrac{|\mathcal{B}|}{N}$. Then, for every $K$, the probability that an expander random walk $\left(X_1, \ldots, X_K\right)$ of length $(K-1)$ in $D$ never leaves $\mathcal{B}$, is at most $\beta(\beta+\alpha \cdot(1-\beta))^{K-1}$.
\end{lemma}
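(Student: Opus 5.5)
We prove \Cref{exw} by the standard spectral argument for walks that stay in a bad set. Let $M$ be the normalized adjacency matrix of $D$, so $M$ is symmetric and doubly stochastic. Let $P$ be the diagonal projection onto coordinates in $\mathcal{B}$, and let $u = \frac{1}{N}\mathbf{1}$ be the uniform distribution.

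\textbf{Step 1: express the probability algebraically.} The probability that $(X_1,\ldots,X_K)$ never leaves $\mathcal{B}$ is
\[
\mathbf{1}^{\top} (PM)^{K-1} P u .
\]
This is proved by induction on the step count. The vector $(PM)^{i-1}Pu$ records, for each vertex $v$, the probability that the walk stays in $\mathcal{B}$ for $i$ steps and ends at $v$.

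\textbf{Step 2: the key operator bound.} We show that for every $y$ supported on $\mathcal{B}$,
\[
\|PMy\|_2 \le \bigl(\beta+\alpha(1-\beta)\bigr)\|y\|_2 .
\]
Decompose $y = y^{\parallel} + y^{\perp}$, where $y^{\parallel} = \langle y,\mathbf{1}\rangle \mathbf{1}/N$ is the component along $\mathbf{1}$. Then $My = y^{\parallel} + My^{\perp}$ with $\|My^{\perp}\|\le \alpha\|y^{\perp}\|$, so
\[
\|PMy\| \le \|Py^{\parallel}\| + \alpha\|y^{\perp}\| .
\]
By Cauchy--Schwarz on the support of $y$, $|\langle y,\mathbf{1}\rangle| \le \sqrt{\beta N}\,\|y\|$, which gives $\|Py^{\parallel}\| \le \beta\|y\|$.

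This crude bound only yields $\beta+\alpha$. To reach the sharper constant $\beta+\alpha(1-\beta)$, parametrize $\|y^{\parallel}\| = c\|y\|$ with $c \le \sqrt{\beta}$. Then $\|y^{\perp}\| = \sqrt{1-c^2}\,\|y\|$ and $\|Py^{\parallel}\| = c\sqrt{\beta}\,\|y\|$. It remains to maximize
\[
f(c) = c\sqrt{\beta} + \alpha\sqrt{1-c^2} \quad \text{over } c \in [0,\sqrt{\beta}] .
\]
One way is to bound $\sqrt{1-c^2} \le 1 - c^2 + \ldots$. Alternatively, work with $\langle PMy, y'\rangle$ for unit $y'$ supported on $\mathcal{B}$, splitting both vectors. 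Then $\langle My,y'\rangle = \langle y^{\parallel},y'^{\parallel}\rangle + \langle My^{\perp},y'^{\perp}\rangle$, which is at most
\[
cc' + \alpha\sqrt{(1-c^2)(1-c'^2)} \le cc' + \alpha\bigl(1 - cc'\bigr),
\]
by Cauchy--Schwarz in the form $\sqrt{(1-c^2)(1-c'^2)} \le 1-cc'$. Since this expression is increasing in $cc' \le \beta$, it is at most $\beta+\alpha(1-\beta)$. I expect getting exactly this constant, rather than the weaker $\beta+\alpha$, to be the main obstacle. The symmetric bilinear-form argument is the route I would try first.

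\textbf{Step 3: chain the estimates.} Write
\[
\mathbf{1}^{\top}(PM)^{K-1}Pu = \langle P\mathbf{1}, (PM)^{K-1}Pu\rangle ,
\]
which is at most $\|P\mathbf{1}\|\,\|(PM)^{K-1}Pu\|$ by Cauchy--Schwarz. Here $\|P\mathbf{1}\| = \sqrt{\beta N}$ and $\|Pu\| = \sqrt{\beta N}/N$. Applying Step 2 repeatedly, which is valid since every iterate is supported on $\mathcal{B}$, gives
\[
\|(PM)^{K-1}Pu\| \le \bigl(\beta+\alpha(1-\beta)\bigr)^{K-1}\|Pu\| .
\]
Multiplying out yields $\beta\bigl(\beta+\alpha(1-\beta)\bigr)^{K-1}$, as claimed.
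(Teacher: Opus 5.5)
The paper gives no proof of this lemma. It imports it as a standard fact about expander walks: it is the hitting property proved in Vadhan's pseudorandomness monograph, which the appendix cites for hashing.

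Your proposal is correct and is essentially that textbook argument.
\begin{itemize}
\item Step 1 is right.
\item Step 3 is right. Every iterate $(PM)^{i}Pu$ is supported on $\mathcal{B}$, so $\mathbf{1}^{\top}x=\langle P\mathbf{1},x\rangle$. The norms $\|P\mathbf{1}\|=\sqrt{\beta N}$ and $\|Pu\|=\sqrt{\beta/N}$ then multiply to the leading factor $\beta$.
\item The bilinear-form computation in Step 2 does give the sharp constant. It yields $\langle My,y'\rangle\le cc'+\alpha(1-cc')=\alpha+(1-\alpha)cc'\le \beta+\alpha(1-\beta)$. This uses $c,c'\le\sqrt{\beta}$, $\alpha\le 1$, and $cc'+\sqrt{(1-c^2)(1-c'^2)}\le 1$, which is Cauchy--Schwarz for the unit vectors $(c,\sqrt{1-c^2})$ and $(c',\sqrt{1-c'^2})$.
\end{itemize}

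The textbook route to the same operator bound is shorter. Write $M=(1-\alpha)J+\alpha E$ with $J=\frac1N\mathbf{1}\mathbf{1}^{\top}$. Then $\|E\|\le 1$, because $E$ fixes $\mathbf{1}$ and equals $M/\alpha$ on $\mathbf{1}^{\perp}$. Hence $\|PMP\|\le(1-\alpha)\|PJP\|+\alpha=(1-\alpha)\beta+\alpha$. Your splitting of both test vectors into parallel and perpendicular parts is the same computation unpacked.

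For a clean write-up, fix the following:
\begin{itemize}
\item Delete the abandoned triangle-inequality route. It genuinely fails: $f(\sqrt{\beta})=\beta+\alpha\sqrt{1-\beta}$ exceeds the target. The suggested bound $\sqrt{1-c^2}\le 1-c^2+\cdots$ also goes the wrong way.
\item Say explicitly why it suffices to test $PMy$ against unit $y'$ supported on $\mathcal{B}$. The reason is that $PMy$ is itself supported there, and $\langle PMy,y'\rangle=\langle My,y'\rangle$ for such $y'$.
\item Note that $\|My^{\perp}\|\le\alpha\|y^{\perp}\|$ uses two facts: $M$ is symmetric, and $\alpha$ bounds all nontrivial eigenvalues in absolute value. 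The second is the intended reading of the paper's expander definition.
\end{itemize}
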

\begin{lemma}[\cite{vadhan2012pseudorandomness}]
    For every $m, n \in \mathbb{N}$, there exists a pairwise independent hash function family $\mathcal{H}_{m, n}$, where a random function can be sampled from $\mathcal{H}_{m, n}$ using a seed $r$ of bitlength $\max \{\log m, \log n\}+ \log n$.
    
    Moreover, given any $h \in \mathcal{H}_{m, n}$ (described by a string in $\{0,1\}^r$ ) and $x \in[m], h(x)$ can be computed in time $\poly(\log m,\log n)$ and space $\mathcal{O}(r)$.
\end{lemma}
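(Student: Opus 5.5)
The plan is the standard affine hash family over a binary field, with the additive shift shortened to $\log n$ bits so that the seed has length $\max\{\log m,\log n\}+\log n$. Let $d=\max\{\lceil\log m\rceil,\lceil\log n\rceil\}$. Identify $[m]$ with a subset of $\mathrm{GF}(2^d)$ and $[n]$ with $\{0,1\}^{\lceil \log n\rceil}$, assuming $n$ is a power of two as is standard. Let $\pi:\mathrm{GF}(2^d)\to\{0,1\}^{\log n}$ be the map returning the first $\log n$ coordinates of a field element written in a fixed $\mathrm{GF}(2)$-basis. For a seed $(a,b)\in \mathrm{GF}(2^d)\times\{0,1\}^{\log n}$ define
\[
h_{a,b}(x)=\pi(a\cdot x)\oplus b .
\]
The seed has $d+\log n$ bits, as required.

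\textbf{Pairwise independence.} Fix $x\neq x'$ in $[m]$ and targets $y,y'\in\{0,1\}^{\log n}$. Since $\pi$ is $\mathrm{GF}(2)$-linear and field addition is coordinatewise XOR,
\[
h_{a,b}(x)\oplus h_{a,b}(x')=\pi\bigl(a(x-x')\bigr).
\]
Because $x-x'\neq 0$, the element $a(x-x')$ is uniform on $\mathrm{GF}(2^d)$ when $a$ is uniform. Its projection under $\pi$ is therefore uniform on $\{0,1\}^{\log n}$, because each fibre of $\pi$ has size $2^{d-\log n}$. Moreover $b$ is uniform and independent of $a$. Hence $h(x)=\pi(ax)\oplus b$ is uniform, and conditioned on $a$ the difference $h(x)\oplus h(x')$ is fixed while $b$ remains free. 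Counting gives
\[
\Pr[h(x)=y,\ h(x')=y']=\Pr_a\bigl[\pi(a(x-x'))=y\oplus y'\bigr]\cdot\Pr_b\bigl[b=y\oplus\pi(ax)\bigr]=\frac{1}{n}\cdot\frac1n .
\]
This is exactly pairwise independence.

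\textbf{Efficiency.} To evaluate $h_{a,b}(x)$ we must multiply in $\mathrm{GF}(2^d)$, which needs an explicit degree-$d$ irreducible polynomial over $\mathrm{GF}(2)$. This is the main obstacle: finding one by brute-force search and irreducibility testing costs time exponential in $d$, not $\poly(d)$. My first attempt is to use the explicit family $x^{2\cdot 3^{j}}+x^{3^{j}}+1$, which is irreducible over $\mathrm{GF}(2)$ for every $j\ge 0$. I would round $d$ up to the nearest $2\cdot 3^j$, which inflates $d$ by at most a factor $3$. Padding or truncating inputs and outputs accordingly, the seed length is $\mcO(\max\{\log m,\log n\}+\log n)$. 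If the exact bit length is insisted on, one instead cites the explicit field representation assumed in \cite{vadhan2012pseudorandomness}.

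Given the modulus, compute $a\cdot x$ by schoolbook shift-and-add with reduction modulo the fixed trinomial. This takes $\poly(d)$ time and $\mcO(d)$ space, since it only keeps a running $d$-bit accumulator and a loop counter. Projecting to $\log n$ bits and XORing with $b$ are trivial. The total cost is $\poly(\log m,\log n)$ time and $\mcO(r)$ space.
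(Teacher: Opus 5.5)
The paper gives no proof of this lemma beyond the citation. Your family $h_{a,b}(x)=\pi(ax)\oplus b$ over $\mathrm{GF}(2^d)$, with the shift shortened to $\log n$ bits, is the standard construction behind it, and your pairwise-independence computation is correct. Your assumption that $n$ is a power of two is in fact forced. Exact uniformity $\Pr[h(x)=y]=1/n$ under a uniform $r$-bit seed requires $n\mid 2^r$, so the statement only makes sense with $[n]$ read as $\{0,1\}^{\log n}$.

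The gap is in the efficiency part, where the statement demands three things at once: seed length exactly $d+\log n$, evaluation time $\poly(\log m,\log n)$, and space $\mcO(r)$.

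\begin{itemize}
\item Your trinomial route via $x^{2\cdot 3^j}+x^{3^j}+1$ gives the time and space bounds. However, $a$ then ranges over $\mathrm{GF}(2^{d'})$ for some $d<d'<3d$, so the seed is $d'+\log n$. You have therefore proved only a constant-factor version of the lemma.
\item The fallback of citing ``the explicit field representation'' does not close this, because the cited explicitness guarantee is a polynomial-time one; the $\mcO(r)$ space clause is not covered by it. The deterministic $\poly(d)$-time way to produce an irreducible polynomial of degree exactly $d$ (Shoup's algorithm) is not evidently an $\mcO(d)$-space procedure. Deterministic exhaustive search over candidates, each tested for irreducibility in small space, fits in $\mcO(d)$ space but only comes with a $2^{\mcO(d)}$ time bound.
\end{itemize}

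So you should either state the weakened lemma you actually prove, or treat the degree-$d$ modulus as hardwired (non-uniformly), rather than deferring the exact bound to the citation.

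For the paper's purposes either repair is harmless. In the proof of \Cref{drep} only $\log N=\mcO(\log n)$ matters, so the constant-factor loss merely raises the constant $6$ in the seed length $(\eps+6)\log n$ to roughly $8$. Alternatively, since $d=\mcO(\log n)$ there, an exhaustive search for the modulus costs $\mcO(\log n)$ space and $\poly(n)$ time. That fits the logspace, polynomial-time setting in which the lemma is actually used.
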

Next, we have the following lemma about a pairwise independent hash function family $\mathcal{H}_{m, n}$. We say that a hash function $h$ hits a set $B \subseteq[n]$, if $\operatorname{Im}(h)$ intersects $B$. \cite{CDKMR} shows that a randomly chosen $h$ from $\mathcal{H}_{m, n}$, where $m= \mathcal{O}(n / \gamma)$, hits any $\gamma$-sized subset of $[n]$ with large probability.
\begin{lemma}[\cite{CDKMR}]\label{hashi}
    For any constant $0<\beta<1 / 2$, let $m=\frac{n(1 / \beta)}{\gamma}$, and let $\mathcal{H}_{m, n}$ be a pairwise independent hash function family from $[m]$ to $[n]$. Then, for any $B \subset[n]$ of size $\gamma$ we have,
    
    $$
        \operatorname{Pr}_{h \sim \mathcal{H}_{m, n}}[h \text { hits } B] \geq 1-\beta .
    $$
    
\end{lemma}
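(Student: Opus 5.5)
The plan is to bound the probability that $h$ misses $B$ with a second-moment argument. For a seed $h \sim \mathcal{H}_{m,n}$ and each $x \in [m]$, let $Z_x = \mathbb{1}[h(x) \in B]$, and set $Z = \sum_{x\in[m]} Z_x$. Then $h$ hits $B$ iff $Z > 0$, so it suffices to show $\Pr[Z = 0] \le \beta$.

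First, the expectation. Pairwise independence gives that each $h(x)$ is uniform on $[n]$. Hence $\mathbb{E}[Z_x] = p := \gamma/n$, and
\[
\mu := \mathbb{E}[Z] = mp = \frac{n}{\beta\gamma}\cdot\frac{\gamma}{n} = \frac{1}{\beta}.
\]

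Next, the variance. Pairwise independence also makes the indicators $Z_x$ pairwise independent. The covariances therefore vanish, and
\[
\operatorname{Var}(Z) = \sum_x \operatorname{Var}(Z_x) = m p(1-p) \le \mu .
\]

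Chebyshev's inequality now gives
\[
\Pr[Z = 0] \le \Pr\big[|Z-\mu| \ge \mu\big] \le \frac{\operatorname{Var}(Z)}{\mu^2} \le \frac{1}{\mu} = \beta .
\]
This is exactly the claimed bound $\Pr[h \text{ hits } B] \ge 1-\beta$. The hypothesis $\beta < 1/2$ is not needed for this step; it only guarantees $m \ge 2n/\gamma$, which is used in the later application.

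The only delicate point is bookkeeping rather than mathematics. One should assume $m$ is an integer, replacing it by $\lceil n/(\beta\gamma)\rceil$ if necessary; this only increases $\mu$ and so only improves the bound. One should also confirm that the family from \cite{vadhan2012pseudorandomness} is genuinely pairwise independent with uniform marginals even when $m \ne n$, for instance by realizing it as an affine map over a field of size at least $\max\{m,n\}$ followed by truncation. If truncation makes the marginals only approximately uniform, I would instead choose field sizes that make the maps exact, such as powers of two with $n$ padded up, which changes $\gamma/n$ by at most a constant factor.
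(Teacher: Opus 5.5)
The paper does not prove this lemma itself; it imports it from \cite{CDKMR} and uses it as a black box in the proof of Lemma~\ref{drep}. Your second-moment argument is correct and complete for the statement as given, since pairwise independence is a hypothesis: $\mathbb{E}[Z]=m\gamma/n=1/\beta$, then $\operatorname{Var}(Z)=mp(1-p)\le \mathbb{E}[Z]$ by pairwise independence of the indicators, and Chebyshev gives $\Pr[Z=0]\le 1/\mathbb{E}[Z]=\beta$. This is the standard proof of this hitting property, and your remarks that $\beta<1/2$ is not needed and that rounding $m$ up only helps are right.

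One aside needs tightening. Padding $n$ up to a power of two can halve $\mathbb{E}[Z]$, so to keep the bound at exactly $\beta$ you must also scale $m$ up by the same factor. This is harmless downstream, because $|U_\sigma|$ is only needed up to $\mcO(\cdot)$.
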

Finally, we prove \Cref{drep}.

\begin{proof}[Proof of \Cref{drep}]
    For a fixed $\gamma=\Omega(\log n)$ and small enough $\beta\in(0,1/2)$ to be fixed later, let $m=\frac{n(1 / \beta)}{\gamma}$, $N=|\mcH_{m,n}|$, and $D=(V_D,E_D)$ be an $(N,\Delta=1+1/\beta^2,\alpha)$-expander given by \Cref{expc} (assume that $\beta^{-1} $ is a prime power). Observe that by the choice of $\Delta$ and the fact that $\alpha \leq \frac{2 \sqrt{\Delta-1}}{\Delta}$, we have $\alpha\le 2\beta$.
    
    In $\mcU$, $U_\sigma$ is constructed as follows: Use $\log N$ bits to identify a vertex $h_1$ of $D$, this is like uniformly sampling from $X_1$ which is the random variable for a vertex of $D$. Next, sample a $K-1$-expander random walk using $K\log \Delta$ bits to get $(h_1,\ldots,h_K)\sim(X_1,\ldots,X_K)$. This takes $\log N+K\log\Delta$ bits. $\log N\le 2\log n$ since $m<n$ for small enough $\beta$. Again it can be seen (as in \cite{CDKMR}) that for small enough $\beta$, setting $K=\frac {(2+\frac\eps8)\log n}{\log(\frac 1{3\beta})}$, we have $2\log n+K\log \Delta\le (\eps+6)\log n$ i.e. the seed $\sigma \in\{0,1\}^{(\eps+6)\log n}$. Finally, $U_\sigma=\cup_{i\in[K]}\opn{Im}h_i$, and thus we have $|U_\sigma|\le K\cdot m=\mcO\Big(\frac {n\log n}\gamma\Big)$.

    Let $\mathcal{B}$ be the subset of $V_D$ in $D$ that contains hash functions in $\mathcal{H}_{m, n}$ that do not hit $P_{1}$. Applying  \Cref{hashi} for subsets of size $\gamma$, we see that $|\mathcal{B}|/N\le \beta$.
    
    Now, using the property of an expander random walks, \Cref{exw} and the fact that $\alpha\le 2\beta$, we have
    
    $$
        \begin{aligned}
            \underset{\left(h_1, \ldots, h_K\right) \sim\left(X_1, \ldots, X_K\right)}{\operatorname{Pr}} & {\left[\forall i \in[K], \opn{Im}(h_i)\cap V(P_1)=\phi\right] }       \\
                                                                                                          & \leq \beta \cdot(\beta+\alpha \cdot(1-\beta))^{K-1}                   \\
                                                                                                          & \leq \beta \cdot(\beta+2 \beta \cdot(1-\beta))^{K-1} \leq(3 \beta)^K
        \end{aligned}
    $$

    Again, since $K=\frac{\left(2+\frac{\varepsilon}{8}\right) \log (n)}{\log \left(\frac{1}{3 \beta}\right)}$, the probability that $P_{1}$ does not intersect with $U_\sigma$ is at most $\frac{1}{n^{2+\varepsilon / 8}}$. Applying the union bound over all paths $P_i\in\mcP$, we have that:
    \[\Pr_{\sigma\in\{0,1\}^{(\eps+6)\log n}}[\forall i\le n^2: U_\sigma\cap V(P_i)\ne \phi]\ge 1-\frac{1}{n^{\varepsilon / 8}}>3/4.\]
    
\end{proof}

\section{The determinant and Pfaffian as the coefficient of a polynomial}
\subsection{Determinant}\label{app:det}
Newton's identities give us a way to compute elementary symmetric polynomials from power sums and vice-versa.
Let $x_1, \dots, x_n$ be variables, denote for $k \geq 1$ by $p_k(x_1, \dots, x_n)$ the $k$-th power sum:
\[
	p_k(x_1, \dots, x_n) = \sum_{i=1}^n x_i^k = x_1^k + \dots + x_n^k,
\]
and for $k \geq 0$ denote by $e_k(x_1, \dots, x_n)$ the elementary symmetric polynomial, that is, the sum of all distinct products of $k$ distinct variables, so
\begin{align*}
	e_0(x_1, \dots, x_n) & = 1,                                  \\
	e_1(x_1, \dots, x_n) & = x_1 + x_2 + \dots + x_n,            \\
	e_2(x_1, \dots, x_n) & = \sum_{1 \leq i < j \leq n} x_i x_j, \\
	                     & \vdots                                \\
	e_n(x_1, \dots, x_n) & = x_1 x_2 \cdots x_n,                 \\
	e_k(x_1, \dots, x_n) & = 0, \quad \text{for } k > n.
\end{align*}

Let $\lambda_1, \dots, \lambda_n$ be the eigenvalues of the matrix $A$. We know that sum of eigenvalues of a matrix is equal to its trace and eigenvalues of $A^\ell$ are the eigenvalues of $A$ raised to the power $\ell$, for more details refer to any standard linear algebra textbook.

Then from the above two facts, we get $\operatorname{tr}(A^\ell) \equiv \sum_{i=1}^{n}\lambda_i^\ell \pmod{q}$, which we can compute using the matrix powering algorithm from \ref{thm:apow}. Thus $\sum_{i=1}^{n}\lambda_i^\ell$, for $1 \leq \ell \leq n$, are our power sums and from these we will compute the $n^{th}$ elementary symmetric polynomial $\prod_{i=1}^{n} \lambda_i$, which is equal to the determinant of the matrix $A$.

The ordinary generating function of the elementary symmetric polynomials is
\[
	E(t):=\sum_{j=0}^n e_j(\lambda_1,\ldots,\lambda_n)t^j
	=\prod_{i=1}^n(1+\lambda_i t)
	=\det(I+tA).
\]

Taking logarithms as formal power series gives
(see Stanley~\cite[Chapter 7, Equation~(7.23)]{Stanley99ec2})
$$\log E(t)
	= \sum_{i=1}^n \log(1+\lambda_i t)
	= \sum_{i=1}^n \sum_{k\geq 1}(-1)^{k+1}\frac{\lambda_i^k t^k}{k}
	= \sum_{k\geq 1}(-1)^{k+1}\frac{p_k}{k}t^k .$$
Hence we have,
$$
	\det(I+tA)=
	\exp\left(\sum_{k\geq 1}(-1)^{k+1}\frac{\operatorname{tr}(A^k)}{k}t^k\right).
$$
Since $\det(I+tA)$ has degree at most $n$, the coefficient of $t^n$ is obtained from the truncation modulo $t^{n+1}$, thus yielding \cref{eq:det}.

\subsection{Pfaffian}\label{app:pfaff}
Fix the $2\times 2$ matrix:
$$
	J_2=\begin{pmatrix}0&1\\-1&0\end{pmatrix}
$$
and $J$ be a block diagonal matrix of size $n \times n$ as
$
	J=\operatorname{diag}(J_2,\ldots,J_2).
$
We have
$
	\pf(J)=1, \det(J)=1$ and $J^{-1}=-J.
$

Consider
$
	Q(t)=\pf(J+tA),
$
which is valid since the matrix $J+tA$ is skew symmetric too. Every term in a Pfaffian is a product of \(m\) matrix entries. Therefore \(Q(t)\) has degree at most \(m\), and choosing the \(tA\) contribution from every factor gives $[t^m]Q(t)=\pf(A)$.

By Cayley's theorem
$
	Q(t)^2=\det(J+tA).
$
Set $B=J^{-1}A=-JA$. Then we have
$$
	Q(t)^2=\det J\cdot\det(I+tB)=\det(I+tB).
$$

Since $Q(0) = \pf(J) = 1$, it is the unique formal square root with constant term $1$, thus we have
$$
	Q(t)=\sqrt{\det(I+tB)}.
$$

From the determinant identity we derived in the previous section, we get
$$
	\sqrt{\det(I+tB)}=
	\exp\left(\frac{1}{2} \cdot \sum_{k\geq 1}(-1)^{k+1}\frac{\operatorname{tr}(B^k)}{k}t^k\right)
$$
and consequently we get \cref{eq:pfaff}.

\section{\texorpdfstring{Remaining part of Proof of \cref{thm:PlMinWtPM}}{Remaining part of Proof of Theorem~\ref{thm:PlMinWtPM}}}\label{app:MNProof}
\PlMinWTPM*
\begin{proof}
	Now, consider an extra block of catalytic space consisting of $100m\log m$ bits. A compress or compute algorithm to compute a min-weight perfect matching is as follows (this is a careful adaptation of \cite{ACD}'s reduction from weighted decision to weighted search):
	\begin{enumerate}
		\item  Let $free\gets 0$ be the amount of free space available to us from the catalytic tape.
		\item Read the first $10m\log m $ bits after $free$ number of bits on the catalytic tape, as a weight assignment: $W(e_1),\ldots,W(e_m)$, $W:E\to[m^{10}]$. Check if the weight assignment $\hat W=MW_{input}+W$ min-isolates a perfect matching in $G$ (use the counting oracle to do this, done in \cite{ACD}), where $M = nm^{10}$ is a large constant.
		\item Suppose $\hat W$ is indeed a min-isolating weight assignment. Then output the unique min-weight matching w.r.t $\hat W$, as the set of edges on deleting which the minimum value changes (done in detail in \cite{ACD}) and go to step $5$ for reconstructing the catalytic tape.
		\item Suppose $W$ is not isolating, then we find a threshold edge $e_i$ (use \cite{ACD}), and replace $W$ with $(freespace,i,W(e_1),\ldots,W(e_{i-1}),W(e_{i+1}),\ldots,W(e_m))$ where $free\gets free+9\log m$ is currently the number of free bits available (which is $10m\log m-\log m-10(m-1)\log m$), and $freespace$ is just $free$-many bits available to us for any computation. At this stage, either $free\le 100m\log m-10m\log m$ or not. If $free\le 90m\log m$, repeat from step $2$. Otherwise do the following:
		      Using the oracle for weighted counting as a decision oracle, search for a min-weight perfect matching using self-reducibility of matching. Clearly, self-reduction requires to store $m$ edges, i.e. uses at most $m\log m$ space. This can be done in the $freespace$. Finally go to step $5$ to reconstruct the catalytic tape.
		\item To reconstruct the catalytic tape, we decompress the weight assignments backwards, i.e, consider the first $\log m+10(m-1)\log m$-bits after the $freespace$ ($free$ number of bits): $(i,W(e_1),\ldots,W(e_{i-1}),W(e_{i+1}),\ldots,W(e_m))$. Compute the weight of the threshold edge $e_i$ (\cite{ACD}), and replace it with $(W(e_1),\ldots,W(e_i),\ldots,W(e_m))$ and subtract $free\gets free-9\log m$. Repeat this process, until $free=0$, i.e. when the catalytic tape has finally been restored.
	\end{enumerate}
	To fill in the details, we refer \cite{ACD}. The above clearly requires $\mcO(n\log^2 n+m\log n)=\mcO(n\log^2 n)$ catalytic space, and runs in polynomial time. Thus, we have the result.
\end{proof}

\section{An overview of Miller \& Naor in our context}\label{app:MN}
We now make the sketch from \Cref{subsec:mn} more precise. Miller \& Naor \cite{MN} deal
more generally with the problem of finding a feasible flow in a capacitated
planar network with designated demands for every vertex. In the case of
planar bipartite matching, every vertex in one bipartition $A$ has a negative
demand of $-1$ (i.e. a surplus of $1$) and every vertex in the other bipartition
$B$ a positive demand of $+1$. On the other hand every arc $(a,b)$ has capacity
$c(a,b) = 1$ while the reverse arc $(b,a)$ has a capacity $c(a,b) = 0$
$a \in A, b \in B$.

In order to construct such a flow function $f$, they first ignore the capacities
and find a flow $f^{(0)}$ in the network $G^{(0)}$ satisfying the demands.
To do so, they
construct a spanning tree $T$ in the network and make sure that every edge
$e = \{u,v\}$ carries a flow equal to the sum of the demands in the subtree
containing $v$ formed by deleting the edge $e$ from the tree. Notice that the
tacit assumption is that the total demand in the network is zero and hence the
flow on the edge matches the surplus on the $u$ side subtree of $T - e$. It
also means $f^{(0)}$ is skew-symmetric.

They then construct a new flow network $G^{(1)}$ with the capacity of edge $e$
being $c_1(e) = c(e) - f^{(0)}(e)$ (and for the reverse edge $e^R$,
$c_1(e^R) = c(e^R) - f^{(0)}(e^R) = c(e^R) + f^{(0)}(e)$). They go on to
construct the directed dual
of this network where the dual $e^*$ of the edge $e$ is obtained by turning
the edge $e$, clockwise. The weight $w(e^*) = c_1(e)$. They show that the
original flow network has a valid flow iff the dual flow network under the
weights $w$ has no negative cycle.

If the dual does not have a negative cycle, shortest paths are well defined
in the dual. Let $s^*$ be an arbitrary vertex in the dual (i.e. a face in the
primal). Compute the shortest distance $d(s^*,v^*)$ from $s^*$ to every
vertex $v^*$. Define $f^{(1)}(e) = d(s^*,v^*) - d(s^*,u^*)$ where
$(u^*,v^*)$ is the directed dual edge of $e$. Then, \cite{MN} show that
$f^{(1)}$ is a valid flow in the residual graph $c - f^{(0)}$ i.e.
$f^{(1)}$ is skew-symmetric and $f^{(1)}(e) \leq c(e) - f^{(0)}(e)$ for every
edge $e$. In other words, $f = f^{(0)} + f^{(1)}$ is a valid flow in the
original network $G$. This is because it is skew-symmetric from the
skew-symmetry of $f^{(0)}, f^{(1)}$ and satisfies
$f(e) = f^{(0)}(e) + f^{(1)}(e) \leq c(e)$.

Therefore in conclusion we obtain \Cref{thm:MN}.

\end{document}